\newtheorem{theorem}{Theorem}[section]
\newtheorem{proposition}[theorem]{Proposition}
\newtheorem{lemma}[theorem]{Lemma}
\theoremstyle{definition}
\newtheorem{remark}[theorem]{Remark}
\newcommand{\longthmtitle}[1]{\mbox{}{\textit{(#1):}}}
\newcommand{\real}{\ensuremath{\mathbb{R}}}
\newcommand{\complex}{\ensuremath{\mathbb{C}}}
\newcommand{\Cc}{\mathcal{C}}
\newcommand{\Dc}{\mathcal{D}}
\newcommand{\Ec}{\mathcal{E}}
\newcommand{\Gc}{\mathcal{G}}
\newcommand{\Lc}{\mathcal{L}}
\newcommand{\Nc}{\mathcal{N}}
\newcommand{\Qc}{\mathcal{Q}}
\newcommand{\Xc}{\mathcal{X}}
\newcommand\abf{\mathbf{a}}
\newcommand\bbf{\mathbf{b}}
\newcommand\gbf{\mathbf{g}}
\newcommand\hbf{\mathbf{h}}
\newcommand\ibf{\mathbf{i}}
\newcommand\pbf{\mathbf{p}}
\newcommand\qbf{\mathbf{q}}
\newcommand\sbf{\mathbf{s}}
\newcommand\ubf{\mathbf{u}}
\newcommand\vbf{\mathbf{v}}
\newcommand\wbf{\mathbf{w}}
\newcommand\ybf{\mathbf{y}}
\newcommand\Abf{\mathbf{A}}
\newcommand\Bbf{\mathbf{B}}
\newcommand\Dbf{\mathbf{D}}
\newcommand\Ebf{\mathbf{E}}
\newcommand\Ibf{\mathbf{I}}
\newcommand\Mbf{\mathbf{M}}
\newcommand\Rbf{\mathbf{R}}
\newcommand\Xbf{\mathbf{X}}
\newcommand\Ybf{\mathbf{Y}}
\newcommand\Zbf{\mathbf{Z}}
\newcommand\Gammab{\boldsymbol{\Gamma}}
\newcommand\phib{\boldsymbol{\phi}}
\newcommand\rhob{\boldsymbol{\rho}}
\newcommand\varrhob{\boldsymbol{\varrho}}
\newcommand\xib{\boldsymbol{\xi}}
\newcommand\sign{\rm{sign}}
\newcommand{\ones}{\mathbf{1}}
\newcommand{\zeros}{\mathbf{0}}
\newcommand{\diag}{{\rm diag}}
\newcommand\range{\text{range}}
\newcommand{\oprocendsymbol}{\hbox{$\bullet$}}
\newcommand{\oprocend}{\relax\ifmmode\else\unskip\hfill\fi\oprocendsymbol}
\renewcommand{\algorithmicrequire}{\textbf{Input:}}
\def\footnoterule{\kern-3\p@
  \hrule \@width 2in \kern 2.6\p@} 
\definecolor{navy}{HTML}{000080}
\begin{document}

\title{Learning Provably Stable Local Volt/Var Controllers for Efficient Network Operation
}

\author{Zhenyi Yuan,~\IEEEmembership{Student Member,~IEEE}, Guido Cavraro,~\IEEEmembership{Member,~IEEE},\\ Manish K. Singh,~\IEEEmembership{Member,~IEEE}, and Jorge Cort\'es,~\IEEEmembership{Fellow,~IEEE}

\thanks{This work was authored by the National Renewable Energy Laboratory, operated by Alliance for Sustainable Energy, LLC, for the U.S. Department of Energy (DOE) under Contract No. DE-AC36-08GO28308. Funding provided by the NREL Laboratory Directed Research and Development Program. The views expressed in the article do not necessarily represent the views of the DOE or the U.S. Government. The U.S. Government retains and the publisher, by accepting the article for publication, acknowledges that the U.S. Government retains a nonexclusive, paid-up, irrevocable, worldwide license to publish or reproduce the published form of this work, or allow others to do so, for U.S. Government purposes. This work was also partially supported by NSF Award ECCS-1947050.}
\thanks{Z. Yuan and J. Cort\'es are with the Department of Mechanical and Aerospace Engineering, UC San Diego. {\tt \{z7yuan,cortes\}@ucsd.edu}. G. Cavraro is with the Power Systems Engineering Center, National Renewable Energy Laboratory. {\tt guido.cavraro@nrel.gov}. M. K. Singh is with the Department of Electrical and Computer Engineering, University of Minnesota. {\tt msingh@umn.edu.}}
}

\markboth{IEEE Transactions on Power Systems}%
{Yuan \MakeLowercase{\textit{et al.}}: Learning Provably Stable Local Volt/Var Controllers for Efficient Network Operation}


\maketitle

\begin{abstract}
This paper develops a data-driven framework to synthesize local Volt/Var control strategies for distributed energy resources (DERs) in power distribution grids (DGs). Aiming to improve DG operational efficiency, as quantified by a generic optimal reactive power flow (ORPF) problem, we propose a two-stage approach. The \emph{first} stage involves learning the manifold of optimal operating points determined by an ORPF instance.
To synthesize local Volt/Var controllers, the learning task is partitioned into learning local surrogates (one per DER) of the optimal manifold with voltage input and reactive power output. Since these surrogates characterize efficient DG operating points, in the \emph{second} stage, we develop local control schemes that steer the DG to these operating points. We identify the conditions on the surrogates and control parameters to ensure that the locally acting controllers collectively converge, in a global asymptotic sense, to a DG operating point agreeing with the local surrogates. We use neural networks to model the surrogates and enforce the identified conditions in the training phase. AC power flow simulations on the IEEE 37-bus network empirically bolster the theoretical stability guarantees obtained under linearized power flow assumptions. The tests further highlight the optimality improvement compared to prevalent benchmark methods.
\end{abstract}

 \begin{IEEEkeywords}
Distributed energy resources, global stability, local control, Volt/Var control. 
 \end{IEEEkeywords}

\section{Introduction}\label{sec:intro}

The deployment of a massive number of distributed energy resources (DERs) in power distribution grids (DGs) is dramatically changing the electric power grid. Primarily driven by sustainability and economic incentives, DERs present additional opportunities, including reductions in the power generation costs and of greenhouse gas emissions. Nevertheless, DERs' uncoordinated power injections or sudden generation changes could pose challenges to system operations and stability, e.g., induce undesirable voltage deviations in DGs.
To facilitate their integration in power grids, DERs are being provided with sensing and computational capabilities, hence becoming \emph{smart agents}. DERs can exploit the flexibility of their power electronic interface to perform, among other ancillary services, reactive power control. They can also take advantage of the widespread availability of data from DGs and the increased capabilities for storing and processing the data to learn effective control policies.
This paper aims to leverage learning in the synthesis of local Volt/Var controllers for voltage regulation incorporating optimality considerations and rigorous performance guarantees.

\subsubsection*{Literature Review}

The main goal of Volt/Var control strategies is to keep voltages within safe preassigned limits by commanding DERs' reactive power injections.
Classically, DERs' reactive power outputs are computed, in an \emph{open-loop} fashion, by the system operator solving \emph{optimal power flow} (OPF) problems. 
Efficient and advanced solvers for OPF problems are available, see, e.g.~\cite{SHL:14,BC-XAS:18,LG-NL-UT-SL:13}.
However, high penetration of renewable generation and increased variability of DGs require solving numerous instances of OPF problems within a limited time frame.
Aiming to tackle this challenge, several learning-based approaches have been proposed to predict OPF solutions, 
see, e.g.,~\cite{XP-TZ-MC-SZ:21,DO-FG-AR:20,MKS-SG-VK-GC-AB:20,MKS-VK-GBG:21,FF-TWKM-PVH:20}, to mention a few.
Once trained, the inference time for these approaches when presented with a new input is minimal.
Nevertheless, the solution of OPF problems requires information from \emph{all} the buses.
Specifically, power demands from loads and generation limits from generators must be precisely known. Such requirements are prohibitive for practical DGs because, in general, not all the buses are monitored in real time, individual loads are unlikely to announce their demand profiles in advance, and the availability of small size generators is hard to predict.

This has motivated the development of \emph{closed-loop} strategies, which compensate for the lack of information with measurements retrieved from the field.
Given the massive number of controllable devices envisioned to be hosted in future DGs, \emph{decentralized} approaches are often advocated for practical applications.
There are two notable classes of decentralized algorithms. The first consists of \emph{distributed} algorithms in which agents cooperate and share information with peers. Distributed algorithms can achieve optimal performance in the sense that they can be designed to exactly solve a given OPF instance, see, e.g.,~\cite{ED-HZ-GBG:13}.
Optimization-based feedback controllers that steer the network toward
solutions to OPF problems based on the cyclical alternation of sensing, communication, and actuation have recently become popular~\cite{GC-RC:17,ED-AS:18,GQ-NL:20}.
Nevertheless, distributed strategies are suitable for systems endowed with a reliable real-time communication network meeting precise and strict requirements, which are rarely satisfied in practice for DGs.
For instance, in many works, each generator is required to share information with all its neighbors in the power network before every power output update~\cite{GC-RC:17}.
The second class of decentralized algorithms consists of \emph{local} approaches, in which each agent makes decisions based only on information available locally.
In local schemes, reactive power compensations are adjusted based merely on measurements taken locally.
Even though pertinent standards allow DERs to provide reactive power compensations following static Volt/Var control rules, see IEEE standard 1547~\cite{IEEE1547}, the literature has provided a variety of options for local voltage regulation~\cite{GC-RC:17,KT-PS-SB-MC:11,HZ-HJL:15,XZ-MF-ZL-LC-SHL:21}.
However, local schemes have intrinsic performance limitations, e.g., they might fail to regulate voltages even if the overall generation resources are enough~\cite{SB-RC-GC-SZ:19}.

Recent advances in data-driven and learning-based control seek to leverage data from the plant to learn optimal controllers. Though impressive results have been demonstrated, it has not yet been widely used in engineering practice due to the lack of closed-loop stability guarantees.
Various results have been developed to tackle this problem, see e.g., a comprehensive review~\cite{CD-SG-CF:22}. Although most of these works achieve the stability of neural network control by using penalty functions to integrate the stability requirement as \emph{soft} constraints, recent work in power systems, primarily on frequency control, seeks to explicitly engineer the neural network structure to integrate the stability requirement as \emph{hard} constraints, see, e.g.,~\cite{WC-YJ-BZ:22,ZY-CZ-JC:22-scl}.
In the context of Volt/Var control, the goal is to leverage learning techniques to enhance the performance of local control schemes and reduce the gap with distributed and/or optimal controllers, while retaining closed-loop system stability. 
Related works include~\cite{WC-JL-BZ:22} and~\cite{YS-GQ-SL-AA-AW:22}, where 
reinforcement learning is used to learn stability-guaranteed local Volt/Var control schemes.
However, the former enforces stringent derivative constraints on the policies to be searched, whereas the latter only guarantees that the voltages converge to a region, instead of an equilibrium point, and both of them require the control policy to be continuously differentiable. Furthermore, neither of them takes accounts for the reactive power capacity limitations, which are critical when dealing with small-size generators.
Recent works~\cite{CZ-YX-YW-ZYD-RZ:21,SG-SC-VK:22} optimize the local Volt/Var control schemes though  interval optimization with a day-ahead schedule, while preserving closed-loop system stability. However, the considered local Volt/Var curves are limited to a piecewise linear form, and might suffer non-negligible optimality gaps.
Other works provide interesting insights on learning Volt/Var rules, but do not assess the stability of the overall system and hence are not straightforwardly suitable for practical applications:~\cite{SK-PA-GH:19,PST-PHG:22} leverage segmented linear regression techniques to learn local surrogates that predict OPF solutions,~\cite{XS-JQ-JZ:21} proposes to learn the local controller by taking as an input both voltages and active power setpoints, and~\cite{HJ-CW-PL-JZ-GS-FD-JW:18} proposes a framework for tuning the parameters of standard piecewise linear local voltage regulators.

\subsubsection*{Statement of Contributions}
We propose a framework for designing local Volt/Var scheme whose goal is to not only regulate voltages but also act as local surrogates of \emph{optimal reactive power flow} (ORPF) problem solvers. ORPF problems are particular instances of OPF problem in which the goal is to optimize the generator's reactive power injections.
We base our work on the distinction between the \emph{control function} and the \emph{equilibrium function}. The first represents the reactive power update rule, and the latter describes the possible system equilibrium points. In many works, the equilibrium functions coincide with the control functions, but this is not the case for our framework.
We advocate for a two-stage strategy. In the first stage, for each controllable node, the equilibrium function
providing the ORPF solution surrogates is learned from historical data. Precisely, such a function receives as input the local voltage and gives as an output an approximation of the optimal reactive power setpoint.
In the second stage, we devise a control function whose equilibrium points are exactly the ORPF approximated solutions provided by the equilibrium functions. 
The novelties of our paper with respect to the recent literature can be summarized as:
\begin{itemize}
\item The equilibrium functions are not forced to be (piecewise) linear and are not subject to slope limitations~\cite{GC-RC:17,KT-PS-SB-MC:11,HZ-HJL:15}. This relaxes several restrictive constraints on equilibrium functions, which
leads to an enlarged search space of potential candidates of desired OPRF surrogates, thus reducing the optimality gap.
\item The control rule is globally asymptotically stable, as opposed to algorithms that are locally asymptotically stable~\cite{XZ-MF-ZL-LC-SHL:21,WC-JL-BZ:22} or whose stability is not analytically characterized.
Our design provably steers the system to the desirable configuration described by the equilibrium functions irrespective of the initial reactive power injection.
\item We introduce the idea of pseudo data points to enhance the voltage regulation capability of learned controllers when voltages are not within the desired limits.
\end{itemize}
With respect to its preliminary version~\cite{GC-ZY-MKS-JC:22-cdc}, this work differs as follows: it relaxes the differentiability requirement on the control rule;
it establishes global, rather than local, stability guarantees for the control scheme; 
it provides a more general design of neural networks in the learning process; 
and it uses of pseudo data points.

\subsubsection*{Outline}
The paper is organized as follows.
In Section~\ref{sec:modeling}, we model a power distribution network and define the problem of interest.
Section~\ref{sec:control_rule} introduces the control scheme, states its stability properties, and identifies conditions on the equilibrium functions needed for the system stability. Sections~\ref{sec:learn_eq_function} describes the learning process: precisely, it reports how we construct the data set used for learning and how we parameterize the equilibrium function with a neural network to meet the required conditions by design. 
Numerical simulations in Section~\ref{sec:tests} validate the proposed approach and show significant improvements with respect to prevalent benchmark methods. Finally, Section~\ref{sec:conc} concludes this work.


\subsubsection*{Notation}
Throughout the paper, $\real$ and $\complex$ denote the set of real and complex numbers, respectively. Upper and lowercase boldface letters denote matrices and column vectors, respectively. Sets are represented by calligraphic symbols. Given a vector $\abf$ (a diagonal matrix $\mathbf A$), its $n$-th (diagonal) entry is denoted by $a_n$ $(A_n)$. $\Abf \succ(\succeq)~0$ denotes that matrix $\Abf$ is positive (semi-) definite, and $\Abf \prec(\preceq)~0$ denotes that matrix $\Abf$ is negative (semi-) definite. The symbol $(\cdot)^\top$ stands for transposition, and $\ones,\zeros, \Ibf$ denote vectors of all ones and zeros and identity matrix with appropriate dimensions, respectively. Operators $\Re(\cdot)$ and $\Im(\cdot)$ extract the real and imaginary parts of a complex-valued argument, and act element-wise. With a slight abuse of notation, we use $|\cdot|$ to denote the absolute value for real-valued arguments, the magnitude for complex-valued arguments, and the cardinality when the argument is a set. $\|\cdot\|$ represents the Euclidean norm. Given a symmetric matrix $\Abf$, $\lambda_{\max}(\Abf)$ and $\lambda_{\min}(\Abf)$ represent its largest and smallest eigenvalue, respectively. For any matrix $\Bbf$, it holds that $\|\Bbf\| = \sqrt{\lambda_{\max}(\Bbf^\top\Bbf)}$. The graph of a function $\phi: \real \rightarrow \real$ is the set of all points of the form $(x, \phi(x))$, whereas the range of $\phi$ is the set of its possible output values.


\section{Grid Modeling and Problem Formulation}\label{sec:modeling}

Consider a balanced three-phase power distribution network\footnote{Although the analysis and algorithms developed in this work consider balanced grids, contemporary research has demonstrated applicability of related approaches to unbalanced multiphase networks~\cite{SG-SC-VK:22,JF-YS-GQ-SHL-AA-AW:22}. We leave for future research a rigorous and detailed extension of our findings to encompass unbalanced grids.} with $N+1$ buses represented by its single-phase equivalent and modeled as an undirected graph $\Gc = (\Nc, \Ec)$, where $\Nc = \{0, 1, \dots, N\}$ are associated with the electrical buses, and $\Ec$ represents the set of the electrical lines between these buses. We label the substation node as 0, and assume that it behaves as an ideal voltage source imposing the nominal voltage of 1 p.u.
Define the following quantities:
\begin{itemize}
\item $u_n\in \complex$ is the voltage phasor at bus $n\in \Nc$;
\item $v_n\in \real$ is the voltage magnitude at bus $n\in \Nc$;
\item $i_n\in \complex$ is the injected current phasor at bus $n\in \Nc$;
\item $s_n=p_n+iq_n \in \complex$ is the nodal complex power injection at bus $n\in \Nc$, where $p_n,q_n\in \real$ are the active and reactive powers, respectively. Powers take positive (negative) values, i.e., $p_n, q_n \geq 0$ ($p_n, q_n \leq 0$), when they are \emph{injected into} (\emph{absorbed from}) the grid.
\end{itemize}
We use vectors $\ubf,\ibf,\sbf \in \complex^N$ to collect the complex voltages, currents, and complex powers of buses $1,2,\ldots,N$, and vectors $ \vbf,\pbf, \qbf \in \real^N$ to collect their voltage magnitudes, and active and reactive power injections.
Denote by $z_e \in \complex$ and by $y_e = z_e^{-1} \in \complex$ the impedance and the admittance of line $e = (m,n) \in \Ec$, respectively.
The network bus admittance matrix $\Ybf \in \complex^{(N+1)\times(N+1)}$ is a symmetric matrix that can be expressed as $\Ybf = \Ybf_L + \diag(\ybf_T)$, where  
\begin{align*}
(\Ybf_L)_{mn} = \begin{cases}
- y_{(m,n)} & \text{ if } (m,n) \in \Ec, m \neq n, \\
0 & \text{ if } (m,n) \notin \Ec, m \neq n, \\
\sum_{k \neq n} y_{(k,n)} & \text{ if }m = n,
\end{cases}
\end{align*}
and the vector $\ybf_T$ collects the shunt components of each bus. The matrix $\Ybf_L$ is a complex Laplacian matrix, and hence satisfies $ \Ybf_L \ones = \zeros$.
We partition the bus admittance matrix by separating the components associated with the substation and the ones associated with the other nodes, obtaining
\begin{align*}
\Ybf = \begin{bmatrix}
 y_{0}&\ybf_0^\top \\
 \ybf_0& \tilde \Ybf
 \end{bmatrix} ,
\end{align*}
with $y_{0} \in \complex$, $\ybf_0 \in \complex^{N}$, and $\tilde \Ybf \in \complex^{N \times N}$.
If the network is connected, then $\tilde \Ybf$ is invertible~\cite{AMK-MP:17}. Let $\tilde \Zbf := \tilde \Ybf^{-1}$, the power flow equations are
\begin{subequations}
\label{eq:PFeq}
\begin{align}
  \ubf &= \tilde \Zbf \ibf+ \hat \ubf , \label{eq:nodevoltage}\\
  u_0 &= 1, \label{eq:PCCidealvoltgen}\\
  i_0 &= \ones^\top \ibf, \label{eq:PCCidealcurrent}\\
 u_n \bar i_n & = p_n + j q_n, \qquad n\neq 0, \label{eq:nodeconstpwr}\\
 v_n & = |u_n| \label{eq:vm_def}
\end{align}    
\end{subequations}
where $\bar i_n$ denotes the complex conjugate of $i_n$ and $\hat \ubf := \tilde \Zbf \ybf_0$. Eq.~\eqref{eq:nodevoltage} represents the Kirchoff equations and provides the relation between voltages and currents. Eqs.~\eqref{eq:PCCidealvoltgen} and~\eqref{eq:PCCidealcurrent} hold because the substation is modeled as the slack bus.  Eq.~\eqref{eq:nodeconstpwr} comes from the fact that all the nodes, except the substation, are modeled to be constant power buses.

Assume a subset $\Cc \subseteq \Nc$ of buses hosts DERs, with $|\Cc| = C$.
Every DER corresponds to a smart agent provided with some computational sensing capability, i.e., it measures its voltage magnitude.
The remaining nodes constitute the set $\Lc=\Nc\setminus\Cc$ and are referred to as loads.
For convenience, we partition the reactive powers and the voltage magnitudes by grouping together the nodes belonging to the load and generation sets
\begin{align*}
\qbf = \begin{bmatrix}
\qbf_\Cc^\top & \qbf_\Lc^\top
\end{bmatrix}^\top, \quad
\vbf = \begin{bmatrix}
\vbf_\Cc^\top & \vbf_\Lc^\top
\end{bmatrix}^\top.
\end{align*}

Motivated by practical considerations, we assume that the grid is not endowed with a communication network that can be used by agents, loads, and the system operator to share information in \emph{real-time}. As a notable consequence, load demands are not known to the system operator or to agents in real time, preventing the use of open-loop strategies for computing the power outputs. The (averaged) demands could, however, be reported on a hourly/daily basis for slow-timescale applications such as billing.

The massive deployment of DERs in DGs might induce voltage quality issues. For example, sudden generation drops could lead the voltages of a network with high penetration of renewables below desired operational limits and even close to collapse. Since DERs are able to provide ancillary services, reactive power compensation can be used to regulate voltage profiles. Ideally, one wants the DER reactive power setpoints to be the solution of an \emph{optimal reactive power flow} (ORPF) problem of the form\footnote{More comprehensive OPRF problems could in principle be of interest in practical applications, e.g., considering line flows limitations as well. Although in this paper we focus on OPRF problems of the type~\eqref{eq:ORPF}, our approach can be readily applied to other ORPF formulations. Also, we restrict our attention to the case in which problem~\eqref{eq:ORPF} admits a unique solution. When that is not the case, $\qbf_\Cc^\star(\pbf,\qbf_\Lc)$ can be chosen among the set of minimizers.}
\begin{subequations}\label{eq:ORPF}
	\begin{align}
	\qbf_\Cc^\star(\pbf,\qbf_\Lc):=\arg\min_{\qbf_\Cc}\ &  ~f(\qbf_\Cc) \label{eq:ORPF:cost}\\
	\mathrm{s.t.}\  & ~\eqref{eq:nodevoltage}-\eqref{eq:vm_def}\notag\\
	&~\vbf_{\min} \leq \vbf(\qbf_\Cc) \leq \vbf_{\max} \label{eq:ORPF:c1}\\
    &~\qbf_{\min} \leq \qbf_\Cc \leq \qbf_{\max} \label{eq:ORPF:c2}
	\end{align} 
\end{subequations}
where $\qbf_{\min}, \qbf_{\max} \in \real^C$ are the minimum and maximum DERs' reactive power injections; $\vbf_{\min}, \vbf_{\max} \in \real^N$ are the desired voltage lower and upper bounds on \emph{all} the network buses; and $f:\real^C \rightarrow \real$ is the cost function of interest.
The minimizer depends on the uncontrolled variables $\pbf$ and $\qbf_\Lc$, which appear implicitly in the constraint~\eqref{eq:ORPF:c1} via equation~\eqref{eq:vm_def}.
Also, notice that for given (re)active loads and active generation, the voltage at node $n$ and the vector of the voltage magnitudes become functions exclusively of $\qbf_\Cc$, i.e., 
$$v_n = v_n(\qbf_\Cc), \quad \vbf = \vbf(\qbf_\Cc).$$
The complexity of solving~\eqref{eq:ORPF} depends on the choice of the cost function $f(\cdot)$ and the non-convexity of~\eqref{eq:PFeq}. Tremendous advancements have been made to overcome the computational limitations via convex relaxations~\cite{SHL:14}, linearized power flow equations~\cite{ED-AS:18}, distributed optimization~\cite{ED-HZ-GBG:13}, and learning-based approaches~\cite{MKS-SG-VK-GC-AB:20}. However, solving~\eqref{eq:ORPF} inevitably requires the knowledge of network-wide quantities $(\pbf,\qbf_\Lc)$, centrally or via peer-to-peer communication. Because the necessary supporting real-time communication infrastructure is not prevalent for most distribution systems, the optimal $\qbf_\Cc^\star$ cannot be directly computed. 
Rather, inspired by the ongoing efforts towards designing local communication-free control rules for DERs and the recently reported success of neural-network-based surrogates for OPF, this work proposes a two-stage approach.
In the first stage, termed the \emph{learning stage}, historical data are used to learn functions that map voltages to (approximate) solutions of the ORPF problem~\eqref{eq:ORPF}. Specifically, for each agent $n \in \mathcal C$, we learn a function $\phi_n$
$$\phi_n: \real \rightarrow \real, \; v_n \mapsto \phi_n(v_n)$$
that takes as an input the local voltage $v_n$, and provides as an output the approximated ORPF solution. Given any voltage $v_n$, $\phi_n(v_n)$ represents an approximation of the reactive power that the DER at node $n$ would inject if its voltage is $v_n$ and the network is operating at a solution of~\eqref{eq:ORPF}.
The graph of $\phi_n$, namely, points of the form $(v_n,\phi_n(v_n))$, consists then of the ORPF solutions' surrogates and describes desirable network configurations.
The second stage, termed the \emph{control stage}, aims to design local control rules that steer the network to the aforesaid desirable configurations. That is, the controller equilibrium points are determined by the $\phi_n$'s and, for 
this reason, the $\phi_n$'s are hereafter called \emph{equilibrium functions}.
We introduce first in Section~\ref{sec:control_rule} the local control scheme and derive conditions ensuring system stability. We build on these conditions in Section~\ref{sec:learn_eq_function} to guide the learning of $\{\phi_n\}_{n \in \Cc}$ to promote the efficient operation of the DG. 


\section{Provably Stable Local Volt/Var Control}\label{sec:control_rule}

Here, we propose a local Volt/Var controller and analyze its stability properties. For each $n \in \Cc$, given the equilibrium function $\phi_n$, consider the reactive power update
\begin{align}\label{eq:bus_react_upd} 
    q_n(t+1) = q_n(t) + \epsilon (\phi_n(v_n(t)) - q_n(t)) .
\end{align} 
In~\eqref{eq:bus_react_upd}, the new reactive power setpoint is chosen as a convex combination between the previous one and the equilibrium function evaluated at the current voltage.
Rules such as~\eqref{eq:bus_react_upd} are referred to as \emph{incremental}, because the updated reactive power setpoint is obtained by adding to the previous one an increment weighted by the stepsize parameter $\epsilon \in [0,1]$.
An equilibrium point of~\eqref{eq:bus_react_upd}, denoted $\qbf_\Cc^\sharp$, satisfies for $n\in \mathcal C$
\begin{subequations}
\label{eq:fixed_point}
\begin{align}
& q_n^\sharp = \phi_n(v_n)\\
& v_n^\sharp = v_n(\qbf_\Cc^\sharp). 
\end{align}
\end{subequations}
That is, $(\phi_n(v_n^\sharp), v_n^\sharp)$ belongs to the graph of $\phi_n$ and hence is a desirable configuration.
It remains now to establish under what conditions the algorithm~\eqref{eq:bus_react_upd} converges to an equilibrium.
The following convergence analysis assumes that uncontrolled variables, namely, $\pbf$ and $\qbf_\Lc$, assume arbitrary, but fixed in time, values. 

As customary in the literature of reactive power control for DGs, e.g., see~\cite{ED-AS:18,GQ-NL:20,HZ-HJL:15}, we rely on the following linearization of the power flow equations to study the stability properties of~\eqref{eq:bus_react_upd}.
In general, voltage magnitudes are nonlinear functions of the nodal power injections. Define $\tilde \Rbf:=\Re(\tilde \Zbf)$ and $\tilde \Xbf:=\Im(\tilde \Zbf) \in \real^{N \times N}$, by using a first-order Taylor expansion, the power flow equation can be linearized to obtain~\cite{GC-RC:17}
 \begin{align}
 \vbf = \tilde \Rbf \pbf + \tilde \Xbf \qbf + |\hat \ubf|.
 \label{eq:v=Rp+Xq}
 \end{align}
Also, the matrices $\tilde \Rbf$ and $\tilde \Xbf$ can be decomposed according to the former partition, yielding
\begin{align*}
\tilde \Rbf = \begin{bmatrix}
\Rbf & \Rbf_\Lc \\
\Rbf_\Lc^\top & \Rbf_{\Lc\Lc}
\end{bmatrix}, \quad
\tilde \Xbf = \begin{bmatrix}
\Xbf & \Xbf_\Lc \\
\Xbf_\Lc^\top & \Xbf_{\Lc\Lc}
\end{bmatrix},
\end{align*}
with $\Rbf, \Xbf \succ 0$~\cite{HZ-HJL:15}.
From \eqref{eq:v=Rp+Xq}, voltage magnitudes become functions exclusively of $\qbf_\Cc$:
 \begin{align}\label{eq:volt-approx}
 &\vbf(\qbf_\Cc) = 
 \begin{bmatrix}
 \Xbf \\
 \Xbf_\Lc^\top
 \end{bmatrix} \qbf_\Cc + \hat \vbf,
\end{align}
where
\begin{align}\label{eq:hatv-def}
\hat \vbf &:= \begin{bmatrix}
\hat \vbf_\Cc\\
\hat \vbf_\Lc
\end{bmatrix} = 
\begin{bmatrix}
\Xbf_\Lc\\
\Xbf_{\Lc\Lc}
\end{bmatrix} \qbf_\Lc + \tilde\Rbf \pbf + |\hat \ubf|.
\end{align}
Collecting the $\{\phi_n\}_{n \in \Cc}$ in the vector-valued function
$\phib$, and  adopting the linearization~\eqref{eq:volt-approx}, the system dynamics can be described in compact form as
\begin{subequations}\label{eq:dyn_sys}
\begin{align} 
    & \qbf_\Cc(t+1) = \qbf_\Cc(t) + \epsilon (\phib(\vbf_\Cc(t)) - \qbf_\Cc(t)), \label{eq:dyn_sys_q}\\
    & \vbf_\Cc(t+1) = \Xbf \qbf_\Cc(t+1) + \hat \vbf_\Cc, \label{eq:dyn_sys_v}
\end{align}
\end{subequations}
where it is tacitly assumed that, at the timescale of the above iterates, the exogenous variables $(\pbf,\qbf_\Lc)$ remain constant, resulting in a constant term $\hat \vbf_\Cc$ from~\eqref{eq:hatv-def}. Let  $(\qbf_\Cc^\sharp,\vbf_\Cc^\sharp)$ an equilibrium point of~\eqref{eq:dyn_sys}. By definition, it must satisfy 
\begin{subequations}\label{eq:fixed-point}
\begin{align}
    \qbf_\Cc^\sharp &= \phib(\vbf_\Cc^\sharp), \label{eq:fixed-point-q} \\ 
    \vbf_\Cc^\sharp &= \Xbf \qbf_\Cc^\sharp + \hat \vbf_\Cc. \label{eq:fixed-point-v}
\end{align}
\end{subequations}
In particular, it holds that $q_n^\sharp = \phi_n(v_n^\sharp)$ for each $n \in \Cc$, see~\eqref{eq:fixed_point}, showing that the functions $\{\phi_n\}_{n \in \Cc}$ describe all the possible equilibrium points of~\eqref{eq:dyn_sys_q}.

Define the feasible set of reactive power injections $\Qc := \times_{n \in \Cc} \Qc_n$, with $\Qc_n = \{ q_n : q_{\min,n} \leq q_n \leq q_{\max,n} \}$.
To ensure the stability of the proposed local algorithm, we require each $\phi_n$ to meet the following properties:
\begin{enumerate}[leftmargin=3em]
\item[C1)] $\phi_n$ is Lipschitz, i.e., there exists $L_n < \infty$ such that $|\phi_n(v)-\phi_n(v')| \leq L_n|v-v'|$, for all $v, v' \in \real$; 
\item[C2)] $\phi_n$ is nonincreasing in $v_n$;
\item[C3)] $\range(\phi_n) \subseteq \Qc_n$, i.e., $\phi_n: \real \rightarrow \Qc_n$. 
\end{enumerate}
The role of conditions C1) -- C3) will be clear from the following formal results proved in the Appendix. The first one characterizes the equilibrium points of~\eqref{eq:dyn_sys}. 

\begin{proposition}\longthmtitle{Feasibility of the reactive power update and uniqueness of the  equilibrium}\label{prop:eq-point}
Let $\{\phi_n\}_{n \in \Cc}$ satisfy the conditions C1) -- C3), and assume $\qbf_C(0) \in \Qc$. The reactive power update~\eqref{eq:bus_react_upd} is feasible, i.e., $\qbf_C(t) \in \Qc$, $t \geq 0$.
Moreover, the system~\eqref{eq:dyn_sys} has an unique equilibrium point $(\qbf_\Cc^\sharp,\vbf_\Cc^\sharp)$.
\end{proposition}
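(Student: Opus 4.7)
The plan is to prove feasibility and uniqueness separately. For feasibility, I would proceed by induction on $t$. Rewriting~\eqref{eq:bus_react_upd} as $q_n(t+1) = (1-\epsilon)\, q_n(t) + \epsilon\, \phi_n(v_n(t))$, the update is a convex combination since $\epsilon \in [0,1]$. The inductive hypothesis gives $q_n(t) \in \Qc_n$, and property C3) guarantees $\phi_n(v_n(t)) \in \Qc_n$. Because each $\Qc_n$ is a closed interval and hence convex, the new iterate $q_n(t+1)$ belongs to $\Qc_n$, closing the induction for every $n \in \Cc$ and hence establishing $\qbf_\Cc(t) \in \Qc$ for all $t \geq 0$.

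For the existence part of the equilibrium claim, I would consider the self-map $T: \Qc \to \Qc$ defined by $T(\qbf_\Cc) := \phib(\Xbf \qbf_\Cc + \hat \vbf_\Cc)$. Property C1) (Lipschitzness of each $\phi_n$) makes $T$ continuous, and property C3) guarantees $T(\Qc) \subseteq \Qc$. Since $\Qc$ is a nonempty, compact, convex subset of $\real^C$, Brouwer's fixed point theorem produces at least one $\qbf_\Cc^\sharp \in \Qc$ with $T(\qbf_\Cc^\sharp) = \qbf_\Cc^\sharp$; paired with $\vbf_\Cc^\sharp := \Xbf \qbf_\Cc^\sharp + \hat \vbf_\Cc$, this delivers an equilibrium of~\eqref{eq:dyn_sys} in the sense of~\eqref{eq:fixed-point}.

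The main obstacle is uniqueness, and I would tackle it by exploiting the interplay between the monotonicity condition C2) and the positive definiteness of $\Xbf$ noted after~\eqref{eq:v=Rp+Xq}. Suppose two equilibria $(\qbf_\Cc^\sharp,\vbf_\Cc^\sharp)$ and $(\qbf_\Cc^{\sharp\sharp},\vbf_\Cc^{\sharp\sharp})$ exist. Subtracting the equilibrium identities in~\eqref{eq:fixed-point} yields $\qbf_\Cc^\sharp - \qbf_\Cc^{\sharp\sharp} = \phib(\vbf_\Cc^\sharp) - \phib(\vbf_\Cc^{\sharp\sharp})$ together with $\vbf_\Cc^\sharp - \vbf_\Cc^{\sharp\sharp} = \Xbf(\qbf_\Cc^\sharp - \qbf_\Cc^{\sharp\sharp})$. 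Taking the inner product of the first identity with $\vbf_\Cc^\sharp - \vbf_\Cc^{\sharp\sharp}$, the left-hand side becomes the quadratic form $(\qbf_\Cc^\sharp - \qbf_\Cc^{\sharp\sharp})^\top \Xbf (\qbf_\Cc^\sharp - \qbf_\Cc^{\sharp\sharp})$, while the right-hand side is a sum of terms $(v_n^\sharp - v_n^{\sharp\sharp})\bigl(\phi_n(v_n^\sharp) - \phi_n(v_n^{\sharp\sharp})\bigr)$ that are each nonpositive by C2). Since $\Xbf \succ 0$, the quadratic form is strictly positive unless $\qbf_\Cc^\sharp = \qbf_\Cc^{\sharp\sharp}$, forcing equality in the reactive powers and, via~\eqref{eq:fixed-point-v}, also in the voltages. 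The role of each assumption is then transparent: C1) supplies the continuity needed for Brouwer, C3) traps both iterates and fixed points inside the compact convex set $\Qc$, and C2) combined with $\Xbf \succ 0$ is precisely what rules out multiple equilibria.
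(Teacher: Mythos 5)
Your proposal is correct, and its feasibility and existence arguments coincide with the paper's: the same induction on $t$ using the convex-combination structure of~\eqref{eq:bus_react_upd} together with C3), and the same Brouwer fixed-point argument applied to the continuous self-map $\qbf_\Cc \mapsto \phib(\Xbf\qbf_\Cc + \hat\vbf_\Cc)$ on the compact convex set $\Qc$. Where you genuinely diverge is the uniqueness step. The paper writes the difference of the two equilibria as $\phib(\vbf_\Cc^\natural)-\phib(\vbf_\Cc^\sharp)=\Dbf(\vbf_\Cc^\natural-\vbf_\Cc^\sharp)$ with $\Dbf$ a diagonal matrix of secant slopes (set to zero where the voltages coincide), notes $\Dbf\preceq 0$ by C2), and concludes from $(\Xbf^{-1}-\Dbf)(\vbf_\Cc^\natural-\vbf_\Cc^\sharp)=\zeros$ with $\Xbf^{-1}-\Dbf\succ 0$ that the equilibria coincide. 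You instead take the inner product of the difference identity with the voltage difference: the left side becomes the quadratic form $(\qbf_\Cc^\sharp-\qbf_\Cc^{\sharp\sharp})^\top\Xbf(\qbf_\Cc^\sharp-\qbf_\Cc^{\sharp\sharp})\ge 0$ while the right side is a sum of terms $(v_n^\sharp-v_n^{\sharp\sharp})\bigl(\phi_n(v_n^\sharp)-\phi_n(v_n^{\sharp\sharp})\bigr)\le 0$ by C2), forcing both to vanish and hence $\qbf_\Cc^\sharp=\qbf_\Cc^{\sharp\sharp}$ and, via~\eqref{eq:fixed-point-v}, $\vbf_\Cc^\sharp=\vbf_\Cc^{\sharp\sharp}$. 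Both arguments rest on the same two ingredients, C2) and $\Xbf\succ 0$, but yours is the classical monotone-operator pairing: it avoids constructing the secant-slope matrix and the attendant case split when $v_n^\natural=v_n^\sharp$, and it is arguably cleaner and more readily generalizable (e.g., it would survive if $\phib$ were merely a monotone map rather than a componentwise nonincreasing one); the paper's version, in turn, makes explicit the nonsingularity of $\Xbf^{-1}-\Dbf$, a structure it can reuse in linearized stability reasoning.
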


The next result characterizes the stability properties of the equilibrium point identified in Proposition~\ref{prop:eq-point}.

\begin{proposition}\longthmtitle{Global asymptotic stability of the control rule~\eqref{eq:bus_react_upd}}\label{prop:convergence}
Let $\{\phi_n\}_{n \in \Cc}$ satisfy the conditions C1) -- C3), and assume $\qbf_C(0) \in \Qc$. Define $L = \max_{n \in \Cc} L_n$. If $\epsilon$ is such that 
\begin{equation}
    0 < \epsilon < \min \Big\{ 1,\frac{2}{(\|\Xbf\|L + 1)^2} \Big\}
    \label{eq:conv_cond}
\end{equation}
then the equilibrium of~\eqref{eq:dyn_sys} is \emph{globally asymptotically stable}.
\end{proposition}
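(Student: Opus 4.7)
The plan is to establish global asymptotic stability via a discrete-time Lyapunov argument applied to the closed-loop dynamics reduced to the reactive power variable alone. I would first substitute the voltage update~\eqref{eq:dyn_sys_v} into~\eqref{eq:dyn_sys_q} to obtain the self-contained iteration $\qbf_\Cc(t+1) = T(\qbf_\Cc(t))$ with $T(\qbf) := (1-\epsilon)\qbf + \epsilon\,\phib(\Xbf\qbf + \hat{\vbf}_\Cc)$. Proposition~\ref{prop:eq-point} already guarantees that $\Qc$ is forward invariant under $T$ and contains the unique fixed point $\qbf_\Cc^\sharp$, so what remains is to show convergence to $\qbf_\Cc^\sharp$ from every initial condition in $\Qc$.

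Next, I would take as candidate Lyapunov function $V(\qbf) := \tfrac{1}{2}\|\qbf - \qbf_\Cc^\sharp\|^2$, which is continuous, positive definite, and radially unbounded. With the shorthand $\Delta := \qbf_\Cc(t) - \qbf_\Cc^\sharp$, $\Phi := \phib(\vbf_\Cc(t)) - \phib(\vbf_\Cc^\sharp)$ (so that $\vbf_\Cc(t) - \vbf_\Cc^\sharp = \Xbf\Delta$ by~\eqref{eq:fixed-point-v}), and $\ubf := \phib(\vbf_\Cc(t)) - \qbf_\Cc(t) = \Phi - \Delta$ (using $\qbf_\Cc^\sharp = \phib(\vbf_\Cc^\sharp)$), a direct expansion yields $V(\qbf_\Cc(t+1)) - V(\qbf_\Cc(t)) = \epsilon\,\Delta^\top\ubf + \tfrac{\epsilon^2}{2}\|\ubf\|^2$. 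The argument then reduces to two bounds: (i) the Lipschitz condition C1 combined with $\vbf_\Cc(t) - \vbf_\Cc^\sharp = \Xbf\Delta$ gives $\|\Phi\| \le L\|\Xbf\|\|\Delta\|$ and hence $\|\ubf\| \le (L\|\Xbf\|+1)\|\Delta\|$ by the triangle inequality; and (ii) the nonincreasing condition C2, via the mean-value-type representation $\Phi = -A\Xbf\Delta$ with $A$ a diagonal matrix whose entries lie in $[0,L]$ (obtained from the componentwise Lipschitz and nonincreasing character of each $\phi_n$), together with the positive definiteness of $\Xbf$, should yield the sign bound $\Delta^\top\Phi \le 0$, equivalently $\Delta^\top\ubf \le -\|\Delta\|^2$. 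Putting (i) and (ii) together gives $V(\qbf_\Cc(t+1)) - V(\qbf_\Cc(t)) \le \epsilon\bigl(\tfrac{\epsilon}{2}(L\|\Xbf\|+1)^2 - 1\bigr)\|\Delta\|^2$, which is strictly negative away from $\qbf_\Cc^\sharp$ precisely under $\epsilon < 2/(L\|\Xbf\|+1)^2$; the auxiliary constraint $\epsilon \le 1$ in~\eqref{eq:conv_cond} is needed to keep the iterates inside $\Qc$ as per Proposition~\ref{prop:eq-point}.

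Global asymptotic stability will then follow from the standard discrete-time Lyapunov theorem, since $V$ is continuous, positive definite, radially unbounded, and strictly decreasing on $\Qc \setminus \{\qbf_\Cc^\sharp\}$. I expect the hardest step to be the cross-term sign bound $\Delta^\top\Phi \le 0$: the componentwise monotonicity C2 directly delivers only $(\vbf_\Cc(t) - \vbf_\Cc^\sharp)^\top\Phi \le 0$, and converting this voltage-space inequality into a reactive-power-space inequality requires carefully exploiting the symmetric positive definite structure of $\Xbf$ in concert with the diagonal nature of $A$. This is the place where the interplay between the agent-level monotonicity of $\phib$ and the network-level sensitivity encoded in $\Xbf$ really matters, and is where I would spend the bulk of the technical effort.
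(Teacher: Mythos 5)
Your reduction to the map $T$, the expansion $V(\qbf_\Cc(t+1))-V(\qbf_\Cc(t))=\epsilon\,\Delta^\top\ubf+\tfrac{\epsilon^2}{2}\|\ubf\|^2$, and the bound $\|\ubf\|\le(L\|\Xbf\|+1)\|\Delta\|$ are all fine, but the step you yourself flag as the hardest one—the cross-term bound $\Delta^\top\Phi\le 0$—is not just hard, it is false under C1)--C3). Writing $\Phi=-\Abf\Xbf\Delta$ with $\Abf=\diag(a_1,\dots,a_C)$, $0\le a_n\le L$, you need $\Delta^\top\Abf\Xbf\Delta\ge 0$ for \emph{all} $\Delta$; but $\Abf\Xbf$ is not symmetric, and while its eigenvalues are nonnegative (it is similar to $\Xbf^{1/2}\Abf\Xbf^{1/2}\succeq 0$), its symmetric part need not be positive semidefinite, so the quadratic form can be negative. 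Concretely, take $\Xbf=\bigl[\begin{smallmatrix}1 & 1\\ 1 & 1.1\end{smallmatrix}\bigr]\succ 0$ (exactly the structure of a radial feeder with bus $2$ downstream of bus $1$), $\Abf=\diag(1,0)$ (i.e., $\phi_1$ with secant slope $-1$, $\phi_2$ locally constant, both admissible under C1)--C3)), and $\Delta=(0.1,-1)^\top$: then $\Phi=-\Abf\Xbf\Delta=(0.9,0)^\top$ and $\Delta^\top\Phi=0.09>0$. Scaling the slope, $\Abf=\diag(12,0)$, even gives $\Delta^\top\Phi=1.08>\|\Delta\|^2=1.01$, so $\Delta^\top\ubf\ge 0$ and your candidate $V$ fails to decrease at that point for \emph{every} $\epsilon>0$, in particular throughout the range~\eqref{eq:conv_cond}. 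The problem is structural: C2) is a componentwise monotonicity statement in \emph{voltage} coordinates, so it only delivers $(\vbf_\Cc-\vbf_\Cc^\sharp)^\top\Phi=(\Xbf\Delta)^\top\Phi\le 0$; conjugating by a non-identity $\Xbf$ destroys the sign, and the unweighted Euclidean distance $\tfrac12\|\qbf_\Cc-\qbf_\Cc^\sharp\|^2$ is simply not a Lyapunov function for this dynamics.

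The paper's proof sidesteps exactly this obstruction by working in voltage coordinates: it shows the map $\gbf(\vbf_\Cc)=(1-\epsilon)\vbf_\Cc+\epsilon(\Xbf\phib(\vbf_\Cc)+\hat\vbf_\Cc)$ is a contraction, so the composite matrix that appears is $\Xbf\Mbf$ acting through its \emph{norm and spectrum} rather than through a quadratic form; there, the similarity $\Xbf^{1/2}\Mbf\Xbf^{1/2}\succeq 0$ (Lemma~\ref{lem:eigvls}) plus the Bauer--Fike perturbation bound (Lemma~\ref{lem:bauer_and_fike}) suffice, and no definiteness of a nonsymmetric product is ever needed. If you want to salvage a Lyapunov argument in $\qbf$-space, the natural repair is the $\Xbf$-weighted function $V(\qbf_\Cc)=\tfrac12\Delta^\top\Xbf\Delta$, for which C2) gives the cross term $\Delta^\top\Xbf\ubf=(\Xbf\Delta)^\top\Phi-\Delta^\top\Xbf\Delta\le-\lambda_{\min}(\Xbf)\|\Delta\|^2$ directly; but the resulting stepsize condition then involves $\lambda_{\min}(\Xbf)$ and $\lambda_{\max}(\Xbf)$ and does not reproduce~\eqref{eq:conv_cond} without additional work, so as written your argument does not establish the stated bound.
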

Proposition~\ref{prop:convergence} indicates that, as long as the $\phi_n$'s meet the conditions C1) -- C3), one can always find $\epsilon>0$ so that $(\qbf_\Cc,\vbf_\Cc)$ converges to the unique equilibrium point $(\qbf_\Cc^\sharp,\vbf_\Cc^\sharp)$ under the reactive power update rule~\eqref{eq:bus_react_upd}.
Note that, in condition~\eqref{eq:conv_cond}, $\|\Xbf\|$ is fully determined by the DG and $L$ can be computed once the functions $\{\phi_n\}_{n \in \Cc}$ have been selected.
Because $\qbf_\Lc$ and $\pbf$ are fixed, the convergence of $\qbf_\Cc$ leads, cf.~\eqref{eq:volt-approx}, also to the global asymptotic convergence of $\vbf$.
Finally, we note that the proposed reactive power update rule~\eqref{eq:bus_react_upd} is a generalized version of the local control scheme proposed in~\cite{GC-RC:17} which considers only linear functions (instead of arbitrary nonlinear $\{\phi_n\}_{n \in \Cc}$ satisfying  C1) -- C3)).

\begin{remark}\longthmtitle{Global vs. local asymptotic stability of the equilibrium}
In our previous work \cite{GC-ZY-MKS-JC:22-cdc}, the equilibrium point of~\eqref{eq:dyn_sys} under the reactive power update rule~\eqref{eq:bus_react_upd} is shown to be locally asymptotically stable if $0 < \epsilon < \min \{1,\frac{2}{\|\Xbf\|L + 1} \}$.
The previous claim roughly implies that if $\qbf_\Cc(0)$ is close enough to $\qbf_\Cc^\sharp$, then it converges to $\qbf_\Cc^\sharp$. 
Our result here extends the stability properties from local to global, at the cost of reducing the selection range of $\epsilon$, as $\frac{2}{\|\Xbf\|L + 1} > \frac{2}{(\|\Xbf\|L + 1)^2}$. \oprocend
\end{remark}

\begin{remark}\longthmtitle{Non-incremental vs. incremental reactive power update rules}\label{rmk:slope-constraint}
Many works in the literature consider local Volt/Var control schemes of the form~\cite{IEEE1547,KT-PS-SB-MC:11,HZ-HJL:15}
\begin{equation}
\label{eq:non_incr}
  q_n(t+1) = \varphi_n(v_n(t)) .
\end{equation}
Reactive power update rules such as~\eqref{eq:non_incr}, where the new reactive power setpoints are determined based on the local voltage without explicitly exploiting a memory of past setpoints, are referred to as \emph{non-incremental}~\cite{MF-XZ-LC:15}. 
The equilibrium points of~\eqref{eq:non_incr} satisfy
\begin{align*}
 q_n &= \varphi_n(v_n) , \\
 v_n &= v_n(\qbf_\Cc) ,
\end{align*}
i.e., $\varphi_n(v_n)$ plays the double role of the control function and the  equilibrium function. Thus, even the control rule
\begin{align}\label{eq:ni_bus_react_upd}
    q_n(t+1) &= \phi_n(v_n(t)). 
\end{align}
looks appealing for our framework because its equilibria are determined by $\phi_n$, this rule corresponds to $\epsilon = 1$ in~\eqref{eq:bus_react_upd}.
From the proof of Proposition~\ref{prop:convergence}, one can show that the algorithm~\eqref{eq:ni_bus_react_upd} is globally asymptotically stable if the equilibrium functions meet not only  C1) -- C3) but also
\begin{align}
\label{eq:static_global_conv}
\|\Xbf\|L < \sqrt{2}-1.    
\end{align}
The former condition bounds the slope of the $\{\phi_n\}_{n \in \Cc}$ and appears often in the literature~\cite{GC-ZY-MKS-JC:22-cdc,XZ-MF-ZL-LC-SHL:21,GC-RC:17}. This means that, to ensure the stability of~\eqref{eq:non_incr}, we need to restrict the search space of potential candidates of $\{\phi_n\}_{n \in \Cc}$, thus risking a degradation in system performance in terms of the optimality gap at the equilibrium.
The aforementioned restriction motivates adopting an incremental control like~\eqref{eq:bus_react_upd}.
\oprocend
\end{remark}


\section{Learning Equilibrium Functions For Efficient Network Operation}\label{sec:learn_eq_function}

Having established the conditions on equilibrium functions for system stability, here we lay out a data-driven approach to synthesize the functions
$\{\phi_n\}_{n \in \Cc}$ to improve operational efficiency of the DG.
Specifically, our goal is to learn local equilibrium functions $\{\phi_n\}_{n \in \Cc}$ under which the system equilibrium $\qbf_\Cc^\sharp(\pbf,\qbf_\Lc)$ is as close as possible to the ORPF problem solution $\qbf_\Cc^\star(\pbf,\qbf_\Lc)$. 
The learning process consists of the following steps.
First, given that the solution of~\eqref{eq:ORPF} depends on $(\pbf,\qbf_\Lc)$, 
we build a set $\{(\pbf^k,\qbf_\Lc^k)\}_{k=1}^K$ of $K$ load-generation scenarios. One can  obtain the aforementioned scenarios via random sampling from assumed probability distributions, historical data, or from forecasted conditions for a look-ahead period. 
Second, we solve the ORPF problem~\eqref{eq:ORPF} for these $K$ scenarios to obtain a labeled data set of corresponding minimizers $\Dc = \{(\qbf_{\Cc,k}^\star,\vbf_{\Cc,k}^\star)\}_{k=1}^{K}$, where the parametric dependencies are omitted for notational ease. 
Third, the entries for these minimizers are then separated for each $n \in \Cc$ to obtain data sets of the form $\Dc_n=\{(v_{n,k}^\star,q_{n,k}^\star)\}_{k=1}^K$, and each equilibrium function $\phi_n$ is then trained by solving 
\begin{align}\label{eq:learning_problem}
    \min_{\phi_n}\ & ~\sum_{k=1}^K |q_{n,k}^\star - \phi_n(v_{n,k}^\star) |^2  \\
    \mathrm{s.t.}\  & ~\phi_n~\text{meets the conditions C1) -- C3)}. \notag
\end{align}

Typical approaches to solve~\eqref{eq:learning_problem} involves restricting the search space for function $\phi_n$ via convenient parameterization leading to approaches such as polynomial regression and neural network approximation methods. Here, we adopt the latter. Enforcing the properties C1) -- C3) is, in general, not trivial and depends on many aspects, e.g., the number of considered layers and the used activation functions. In particular, designing monotone neural networks might require additional considerations. Approaches to do so include, for instance, structure-based~\cite{HD-MV:10}, gradient-constrained~\cite{AW-GL:19}, and verification-based~\cite{XL-XH-NZ-QL:20} methods. 
In the following, we provide a single-hidden-layer neural network design framework that achieves  C1) -- C3) and uses the Rectified Linear Unit (ReLU) activation function $${\rm ReLU}(x) = \max(0,x).$$
Note however that, in principle, any continuous and monotonic activation function could be used in our framework, e.g., the Sigmoid or the Tanh\footnote{${\rm Sigmoid}(x) = \frac{1}{1 + e^{-x}},~{\rm Tanh}(x) = \frac{e^{x} - e^{-x}}{e^{x}+e^{-x}}$.}.

Each equilibrium function $\phi_n$ is then modeled using a single-hidden-layer neural network $\mathsf{N}(x)$ as
\begin{align}\label{eq:enc_c3}
\phi_n(x)  =  q_{\max,n} -  {\rm ReLU}(& q_{\max,n} - \mathsf{N}(x)) \notag \\
&+ {\rm ReLU}(q_{\min,n} - \mathsf{N}(x)),
\end{align}
where 
\begin{align}\label{eq:single-layer-NN}
\mathsf{N}(x) = \sum_{h=1}^{H} w_h {\rm ReLU}(x - b_h) + \beta,
\end{align}
with $w_h$ and $b_h$ being the weight and bias of the $h$-th neuron unit, respectively; $\beta$ being an additional bias term applied in the output layer; and $H$ is the number of neuron units in the hidden layer.
Note that conditions C1) and C3) for $\phi_n$ are automatically met because of the Lipschitzness of the ReLU function and because the output of the neural network $\phi_n$ is constrained to the set $\Qc_n$, cf.~\eqref{eq:enc_c3}. Condition C2) is instead encoded by the next result.

\begin{figure*}
\centering
    \includegraphics[scale=0.33]{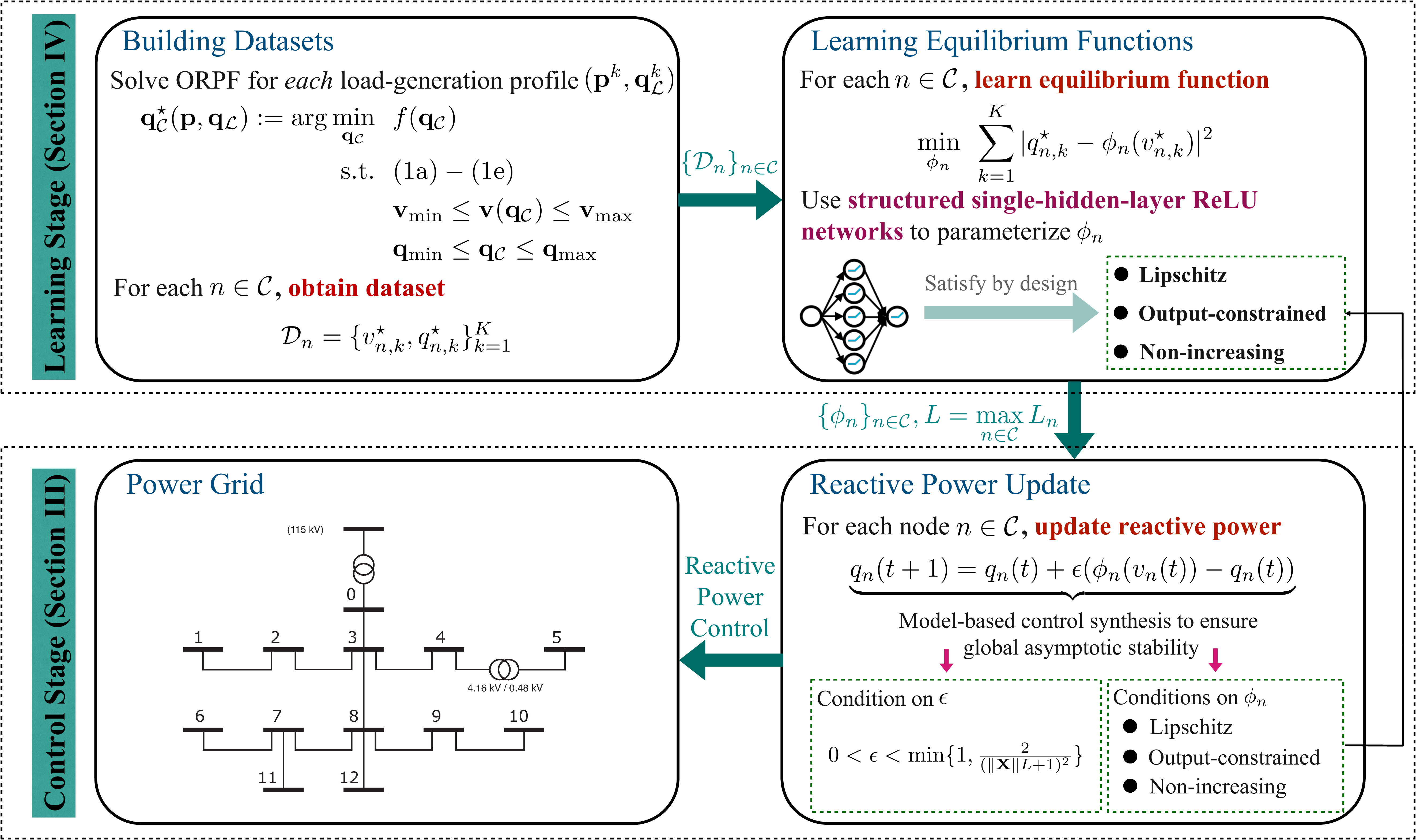}
    \caption{Flowchart of the proposed data-driven control framework.}
    \label{fig:flowchart}
\end{figure*}

\begin{proposition}\longthmtitle{Universal approximation of Lipschitz nonincreasing function using ReLU activation function}\label{prop:universaL-approx}
Consider the single-hidden-layer neural network~\eqref{eq:single-layer-NN}, and reorder the neuron units such that $b_1 \leq b_2 \leq \cdots \leq b_H$.
$\mathsf{N}$ is non-increasing if and only if
\begin{align}\label{eq:enc_c2}
    \sum_{j=1}^{J} w_j \leq 0, \quad \forall J \in \{1,2,\dots,H\}.
\end{align}
Furthermore, for any Lipschitz non-increasing function $g:\real \rightarrow \real$ and given any compact domain $\Xc \in \real$ and $\eta > 0$, there exist $H$, $w_h$, $b_h$, and $\beta$ such that $|\mathsf{N}(x) - g(x)| \leq \eta$ for all $x \in \Xc$.
\end{proposition}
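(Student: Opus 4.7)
My plan is to exploit the piecewise-linear structure that ReLU activations induce on $\mathsf{N}$: $\mathsf{N}$ is a continuous piecewise affine function whose break-points are exactly the biases $b_1 \leq \cdots \leq b_H$, and whose slope on each linear piece is a running sum of the weights $w_h$. This single structural observation will drive both the monotonicity characterization and the approximation argument.

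For the ``if and only if'' part, I will first write the slope of $\mathsf{N}$ piece by piece. With the convention $b_0 = -\infty$ and $b_{H+1} = +\infty$, and observing that the $j$-th ReLU unit is active precisely when $x > b_j$, the slope of $\mathsf{N}$ on the open interval $(b_J, b_{J+1})$ equals $\sum_{j=1}^{J} w_j$ for $J = 0,1,\dots,H$ (with the convention that the empty sum is zero, giving slope zero on $(-\infty,b_1)$). Since $\mathsf{N}$ is continuous, it is non-increasing on $\real$ if and only if its slope is non-positive on every linear piece; sufficiency is then immediate. Necessity follows by picking two points strictly inside $(b_J,b_{J+1})$ and invoking the definition of non-increasing, yielding $\sum_{j=1}^{J} w_j \leq 0$ for each $J$. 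Degenerate cases where $b_j = b_{j+1}$ are handled by merging the two units into an equivalent one with combined weight $w_j + w_{j+1}$, so we may assume without loss of generality that the break-points are strictly ordered.

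For the universal approximation part, my strategy is a direct constructive interpolation. Write $\Xc \subseteq [a,b]$, choose an integer $H$ to be fixed later, place uniform break-points $b_J = a + (J-1)\Delta$ for $J=1,\dots,H$ with $\Delta = (b-a)/(H-1)$, and define $\mathsf{N}$ as the continuous piecewise-linear interpolant of $g$ at these nodes. The piecewise slopes $m_J := (g(b_{J+1}) - g(b_J))/\Delta$ for $J=1,\dots,H-1$ are non-positive by monotonicity of $g$ and bounded by $L$ in absolute value by the Lipschitz hypothesis. Since $m_J = \sum_{j=1}^{J} w_j$, the identities $w_1 = m_1$ and $w_J = m_J - m_{J-1}$ (for $J \geq 2$), together with $\beta = g(a)$, recover a ReLU network of the prescribed form~\eqref{eq:single-layer-NN} whose weights automatically satisfy~\eqref{eq:enc_c2}. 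A standard Lipschitz error bound then gives $|\mathsf{N}(x) - g(x)| \leq L \Delta$ on each $[b_J, b_{J+1}]$, and hence uniformly on $\Xc$; choosing $H$ large enough so that $\Delta \leq \eta/L$ concludes the argument.

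I do not expect any single step to be genuinely hard; the main friction will be bookkeeping, namely keeping the correspondence between the network weights $w_j$ and the interpolation slopes $m_J$ tidy, correctly handling the flat piece $(-\infty, b_1)$ where the slope is vacuously zero, and dealing cleanly with coincident break-points so that the final statement~\eqref{eq:enc_c2} is displayed in its uncluttered form. Everything else reduces to elementary properties of piecewise-linear interpolation of Lipschitz monotone functions.
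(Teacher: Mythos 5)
Your proposal is correct and follows essentially the same route as the paper's proof: the monotonicity characterization via the observation that the slope of $\mathsf{N}$ on the $J$-th linear piece is the partial sum $\sum_{j=1}^{J} w_j$, and the approximation via an equispaced piecewise-linear interpolant of $g$ whose slopes are recovered through the telescoping weights $w_1 = m_1$, $w_J = m_J - m_{J-1}$, with the Lipschitz error bound $L\Delta$ and $H = O(1/\eta)$. Your explicit treatment of coincident break-points and the cleaner contrapositive for necessity are minor refinements of bookkeeping, not a different argument.
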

Its proof can be found in the Appendix, and note that the monotonicity of $\mathsf{N}$ remains after the projection~\eqref{eq:enc_c3}.
This design imposes weight constraints~\eqref{eq:enc_c2} and output constraints~\eqref{eq:enc_c3} on a single-hidden-layer neural network~\eqref{eq:single-layer-NN} so that it meets the conditions C1) -- C3) by construction. 
Leveraging the universal approximation property, the optimization problem~\eqref{eq:learning_problem} is then equivalent to the parameterized formulation
\begin{align}\label{eq:tractable_learning_problem}
    \min_{\wbf,\bbf,\beta}\ & ~\sum_{k=1}^K |q_{n,k}^\star - \phi_n(v_{n,k}^\star) |^2  \\
    \mathrm{s.t.}\  & \eqref{eq:enc_c3}, ~\eqref{eq:single-layer-NN}, ~\eqref{eq:enc_c2} \notag
\end{align}
which can be solved to local optima using suitable renditions of (stochastic) gradient descent prevalent for neural network training.
Also, exploiting the fact that the ReLU function is used as an activation function, the Lipschitz constant $L_n$ of each $\phi_n$ can be easily computed, see~\eqref{eq:enc_c2}, as
\begin{align*}
    L_n = \max_{J\in\{1,..,H\}} \Big|\sum_{j=1}^{J} w_j \Big|.
\end{align*}

The two stages of the proposed framework are summarized in Algorithm~\ref{alg:framework}. Fig.~\ref{fig:flowchart} provides a flowchart to illustrate the overall scheme.
\begin{algorithm}[ht]
  \caption{Local Volt/Var Control Framework}\label{alg:framework}
  \begin{algorithmic}[1]
  \renewcommand{\algorithmicrequire}{\underline{\texttt{Learning Stage (OffLine)}}}
  \Require
  \renewcommand{\algorithmicrequire}{\textbf{Input:}}
  \Require 
    Historical data in the form of $K$ load-generation scenarios $\{(\pbf^k,\qbf_\Lc^k)\}_{k=1}^K$
  \State Solve the ORPF problem~\eqref{eq:ORPF} for each load-generation scenario, obtain the optimal reactive powers $\{\qbf^{\star,k}_{\Cc}\}_{k=1}^K$ 
  \State Compute the optimal voltages $\{\vbf^{\star,k}\}_{k=1}^K$ by solving the power flow equations~\eqref{eq:PFeq} with the power injections $\{(\pbf^k,\qbf^{k}_{\Lc},\qbf^{\star,k}_{\Cc})\}_{k=1}^K$
  \State Build the data set of optimal reactive powers and voltages $\Dc = \{(\qbf_{\Cc}^{\star,k},\vbf_{\Cc}^{\star,k})\}_{k=1}^{K}$
  \State Separate the entries of these minimizers to obtain, for each $n \in \Cc$, data sets of the form $\Dc_n=\{(v_{n}^{\star,k},q_{n}^{\star,k})\}_{k=1}^K$
  \State Learn the equilibrium functions $\{\phi_n\}_{n \in \Cc}$ by solving~\eqref{eq:tractable_learning_problem}
  \end{algorithmic}
  \begin{algorithmic}[1]
  \renewcommand{\algorithmicrequire}{\underline{\texttt{Control Stage (OnLine)}}}
  \Require
  \renewcommand{\algorithmicrequire}{\textbf{Input:}}
  \Require Learned equilibrium functions $\{\phi_n\}_{n \in \Cc}$, 
   network parameter $\Xbf$, constant $L = \max_{n \in \Cc} L_n$, stepsize parameter $\epsilon$ selected according to~\eqref{eq:conv_cond}
   \renewcommand{\algorithmicrequire}{{Each agent $n \in \Cc$, for $t\geq 0$, cyclically repeats:}}
  \Require
  \State Measuring its local voltage magnitude $v_n(t)$
  \State Updating the reactive power as per~\eqref{eq:bus_react_upd}
  \end{algorithmic}
\end{algorithm}

\begin{remark}\longthmtitle{Enhancing the capability to regulate voltages when they are not within desired limits through pseudo data points}\label{rmk:fic-points}
In the above exposition, the data set points are solutions to the ORPF problem~\eqref{eq:ORPF}, which are subject to the constraint~\eqref{eq:ORPF:c1}. That is, for each $n \in \Cc$, the equilibrium function $\phi_n$ is trained only using data points such that $v_{\min,n} \leq v_{n,k}^\star \leq v_{\max,n}$, i.e., not when the voltages exceed the limits.
Nevertheless, in practical implementation, a DG might experience load-generation scenarios in which~\eqref{eq:ORPF} is infeasible and the voltages do not meet the desired constraints.
Engineering considerations suggest that in such cases the available reactive power capability should be maximally used to alleviate the voltage violations as much as possible. Namely, for each $n \in \Cc$, if $v_n < v_{\min,n}$ ($v_n > v_{\max,n}$), then $q_n = q_{\max,n}$ $(q_n = q_{\min,n})$, see~\cite{GC-RC:17,KT-PS-SB-MC:11}.
To ensure that the learned function $\phi_n$ meets this condition, we can add a certain number of additional pseudo data points to the data set, e.g., $\underline K$ points of the form
$\{(\underline v_{n,k},q_{\max,n})\}_{k=1}^{\underline K}$,
and $\overline K$ points of the form $\{(\overline v_{n,k},q_{\min,n})\}_{k=1}^{\overline K}$, with $\underline v_{n,k} \leq v_{\min,n}, \overline v_{n,k} \geq v_{\max,n}$. These points could be uniformly spaced or randomly sampled. Here we adopt the former method, illustrated in Fig.~\ref{fig:data_points}. 
\oprocend
\end{remark}

\begin{figure}[h]
    \centering
    \subfigure[Without pseudo data points]{
    \includegraphics[scale=0.29]{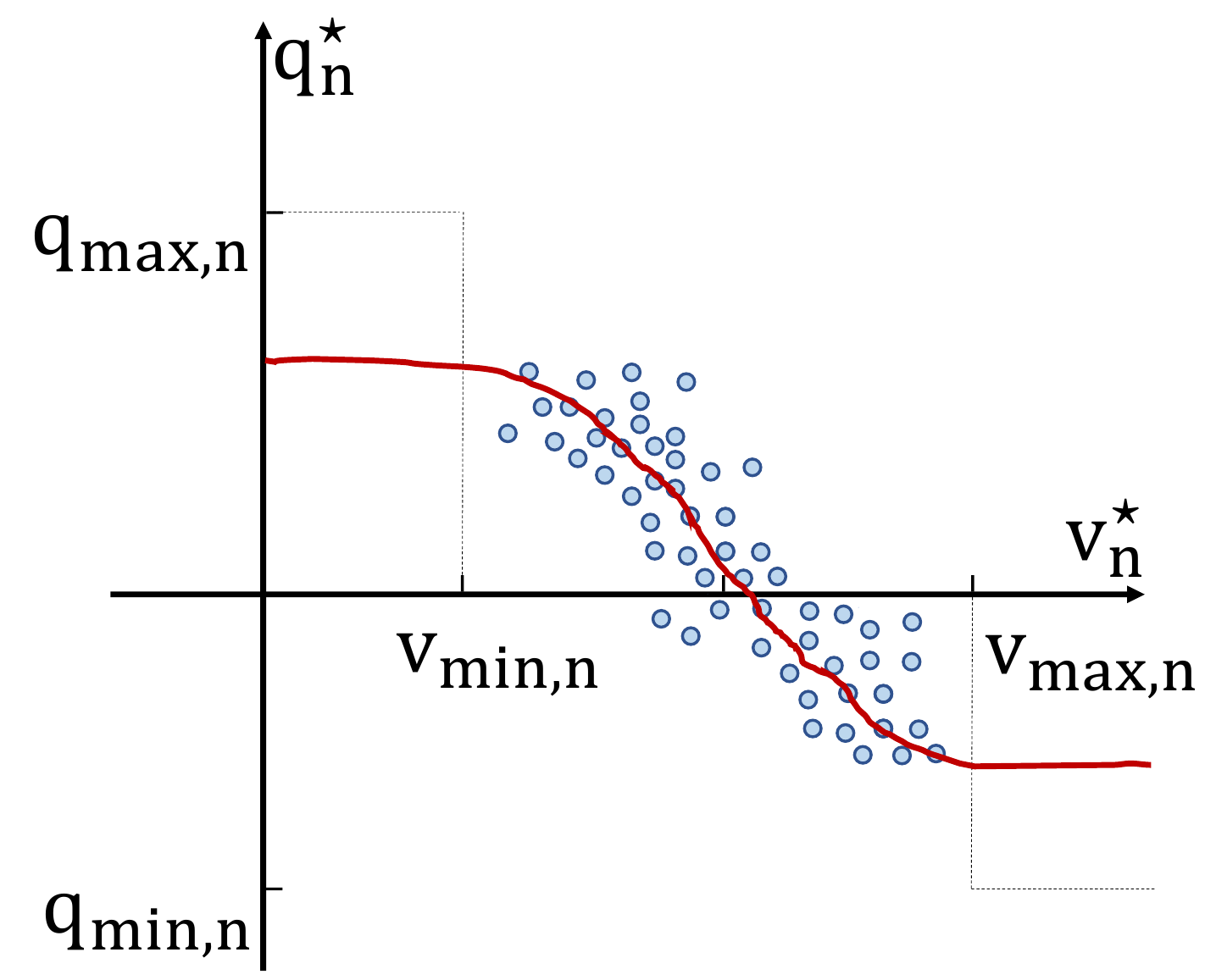}
    }
    \hspace{-.15in}
    \subfigure[With pseudo data points]{
	\includegraphics[scale=0.29]{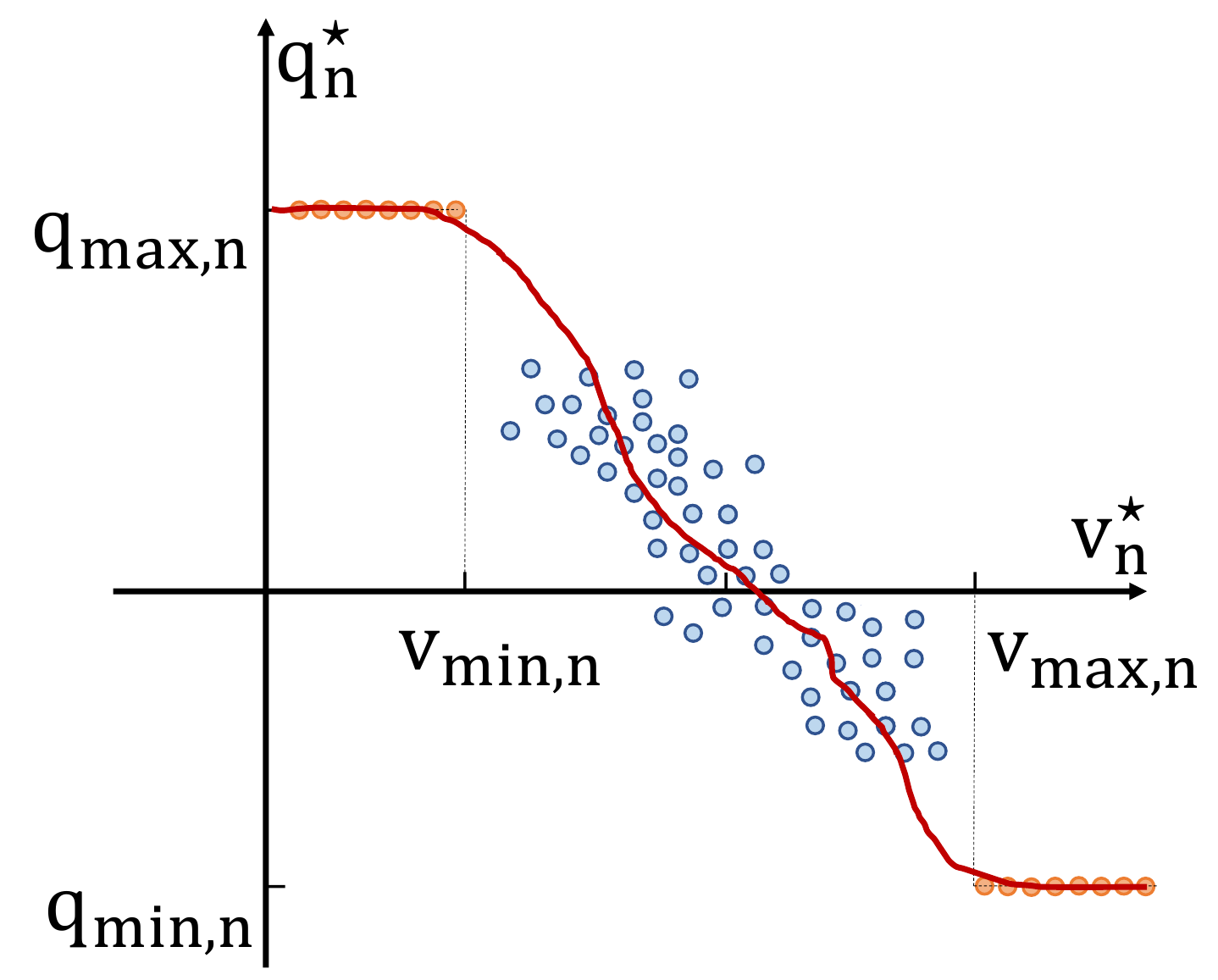}
    }
    \caption{An illustration of the role of the pseudo data points for DER at node $n$. Blue and orange points respectively represent true and pseudo data points, while the dark red curves are instances of learned equilibrium functions. Adding pseudo data points helps the equilibrium functions reach maximum reactive power compensation capability when voltage exceeds the limits.}
    \label{fig:data_points}
\end{figure}

\begin{remark}\longthmtitle{Deep neural network parameterization of non-increasing function via gradient penalization}\label{rmk:DGN}
Though the single-hidden-layer neural network parameterization of~\eqref{eq:single-layer-NN} achieves universal approximation, it requires the ``width'' of the neural networks, i.e., $H$ to be sufficiently large. Instead, in certain situations, deeper neural networks could achieve better approximations than the shallower ones, even if they are much narrower~\cite{IS-OS:17}. On the other hand, the structure-based restrictions enforced on the neural networks to guarantee monotonicity could in some cases 
restrict expressibility and lead to unsatisfactory approximation results~\cite{XL-XH-NZ-QL:20}. To overcome these challenges, we describe a deep neural network parameterization approach by incorporating the monotonicity requirement as a penalization in the cost function of the learning process. Suppose for each $n \in \Cc$, $\phi_n$ is parameterized by a deep neural network, and denote by $d\phi_n(v_n) \in \real$ the (sub)gradient of $\phi_n$ with respect to $v_n$. The cost function in~\eqref{eq:learning_problem} is then replaced by
\begin{align*}
    \sum_{k=1}^K |q_{n,k}^\star - \phi_n(v_{n,k}^\star)|^2 + \gamma_n \max(0,\;d\phi_n(v_{n,k}^\star)),
\end{align*}
where $\gamma_n>0$ is a tuning parameter. During implementation, one can gradually increase $\gamma_n$ until $\phi_n$ is verified to be non-increasing.
\oprocend
\end{remark}

\begin{remark}\longthmtitle{On the comparison with existing reinforcement learning-based approaches}
Recent literature has also investigated reinforcement learning approaches for learning stability guaranteed local Volt/Var controllers, e.g.,~\cite{WC-JL-BZ:22,YS-GQ-SL-AA-AW:22}. However, due to the lack of communication in the training phase, the cost function
that the whole system seeks to minimize in such settings can only be separable, i.e., the summation of all the local cost functions at each node. Therefore, these approaches generally cannot cope with cost functions such as power losses
which shows  coupling among nodes. In contrast, since our approach only uses off-line collected data, any type of cost function could, in principle, be considered when solving the ORPF problem~\eqref{eq:ORPF}.
\oprocend
\end{remark}


\section{Numerical Tests}\label{sec:tests}

\begin{figure}[t]
\centering	
\includegraphics[width=0.7\columnwidth]{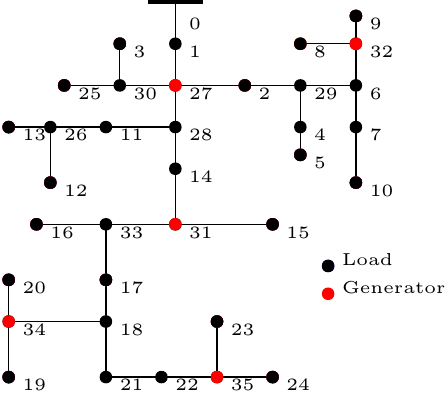}
\caption{The IEEE 37-bus feeder.}
\label{fig:ieee37}
\end{figure}

We conduct case studies on the IEEE 37-bus feeder. We omit regulators, incorporate five solar generators, and convert it to its single-phase equivalent, see Fig.~\ref{fig:ieee37}. The feeder has 25 buses with non-zero load, and the five solar generators are the DERs participating in reactive power compensation.

For our experiments, we use minute-based load and solar generation data retrieved from the Pecan Street data set (June 1, 2018)~\cite{PecanData}. The first 75 nonzero load buses from the data set are aggregated every 3 loads and normalized to obtain 25 load profiles. Similarly, we obtain 5 solar generation profiles for the active power of DERs.
The normalized load profiles for the 24-hour period are scaled so that 
the demand peak is 1.65 times the nominal load.
We synthesize the reactive loads by scaling the active demand to match the power factors of the IEEE 37-bus feeder. Fig.~\ref{fig:solarload} shows the total demand and solar generation across the feeder.

To evaluate the effectiveness of our control framework, it is compared with the standard linear droop control from~\cite{GC-RC:17,KT-PS-SB-MC:11} which dictates
\begin{align*}
&q_n(t+1) = \varrho_n(v_n(t))  \\
&\varrho_n(v_n) := \begin{cases}
q_{\max,n} & v_n(t) \leq v_{\min,n}, \\
q_{\min,n} & v_n(t) \geq v_{\max,n}, \\
-c_n(v_n(t) \!-\! v_{\min,n}) \!+\! q_{\max,n} & \text{otherwise},
\end{cases}
\end{align*}
where $c_n = \frac{q_{\max,n}-q_{\min,n}}{v_{\max,n}-v_{\min,n}}$, and with the  the optimized droop control design\footnote{In~\cite{HJ-CW-PL-JZ-GS-FD-JW:18}, the voltage limits for maximum reactive power provision and absorption are selected as $v_{\min,n} = 0.9$ p.u. and $v_{\min,n} = 1.1$ p.u, respectively. To ensure the DERs reach maximum reactive power compensation capability when voltage exceeds the limits, as discussed in Remark~\ref{rmk:fic-points}, here they are set to $v_{\min,n} = 0.95$ p.u. and $v_{\min,n} = 1.05$.} from~\cite{HJ-CW-PL-JZ-GS-FD-JW:18}
\begin{align*}
&q_n(t+1)= \rho_n(v_n(t)) \\
&
\rho_n(v_n):= \begin{cases}
q_{\max,n} & v_n \leq v_{\min,n}, \\
\frac{\bar v_{\min,n} - v_n}{\bar v_{\min,n} - v_{\min,n}} q_{\max,n} & v_{\min,n} < v_n < \bar v_{\min,n}, \\
0 & \bar v_{\min,n} \leq v_n \leq \bar v_{\max,n}, \\
\frac{v_n - \bar v_{\max,n}}{\bar v_{\max,n} - \bar v_{\max,n}} q_{\min,n} & \bar v_{\max,n} < v_n < v_{\max,n}, \\
q_{\min,n} & v_n \geq v_{\max,n}
\end{cases}
\end{align*}
where $\bar v_{\min,n}$ and $ \bar v_{\max,n}$ are parameters satisfying $v_{\min,n} < \bar v_{\min,n} \leq \bar v_{\max,n} < v_{\max,n}$ which are optimized given the data set prescribing a day-ahead forecast.
\begin{figure}[t]
    \centering
    \includegraphics[width=.52\textwidth]{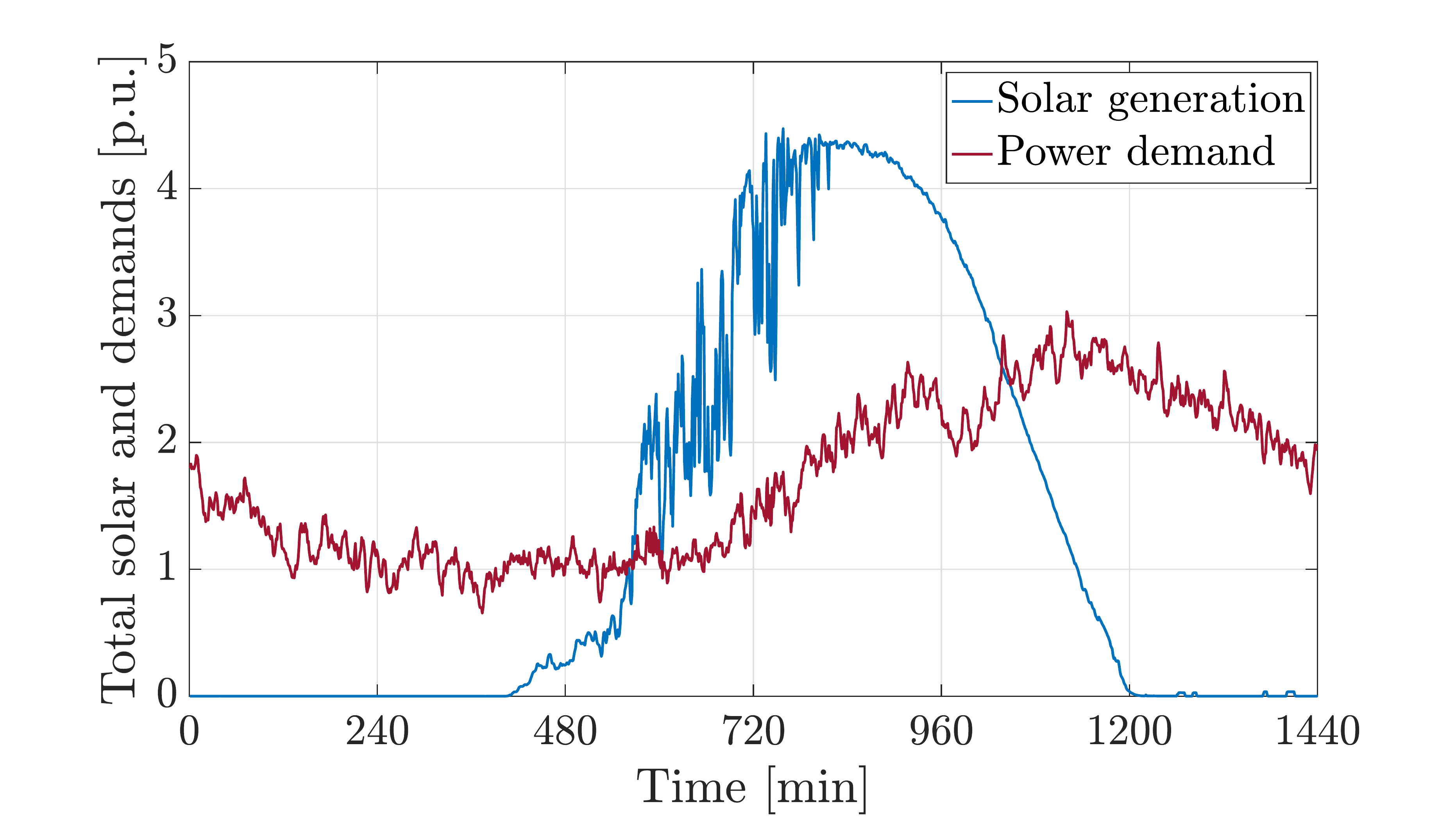}\\
    \caption{Minute-based data for the total (feeder-wise) solar power generation and active power demand.}
    \label{fig:solarload}
\end{figure}

\subsection{The Learning Stage} \label{subsec:simu-learning} 
We considered an optimization problem of the form \eqref{eq:ORPF} where the cost is set to
\begin{align*}
    f(\qbf_\Cc) &= \alpha \underbrace{\|\vbf(\qbf_\Cc) - \ones\|}_{\text{\ding{172}}} + (1-\alpha)\underbrace{(\qbf^\top \tilde \Rbf \qbf + \pbf^\top \tilde \Rbf \pbf)}_{\text{\ding{173}}},
\end{align*}
where \ding{172} and \ding{173} aims to minimize the voltage deviations and power losses~\cite{GC-RC:17}, respectively, and the parameter $\alpha$ trades-off those two objectives. 
We assume the 5 DERs have uniform generation capabilities, precisely, $\qbf_{\max} = 0.4 \times \ones$ MVAR and $\qbf_{\min} = -\qbf_{\max}$. The voltage limit vectors are set to $\vbf_{\max} = 1.05 \times \ones$ p.u. and $\vbf_{\min} = 0.95 \times \ones$ p.u.

The data set for the learning process is built using the aforesaid power demands and generations obtained from the Pecan Street data, which are intended as day-ahead forecasts.
We use the CVX toolbox~\cite{MG-SB:14-cvx} to solve the ORPF problem~\eqref{eq:ORPF} with linearized power flow~\eqref{eq:v=Rp+Xq}. Note however that, one can use any other power flow models to solve the ORPF problem.
We add the pseudo data points to the obtained data set as described in Remark~\ref{rmk:fic-points} with $\underline K = \overline K = 700$, which results in a total of $2840$ data points for each DER.
We implement the neural network approach according to Proposition~\ref{prop:universaL-approx} using TensorFlow 2.7.0 and conduct the training process in Google Colab with a single TPU with 32 GB memory. The number of episodes and the number of neurons $H$ are 2000 and 1000, respectively, and the neural networks are trained using the Adam optimizer~\cite{DPK-JB:15} with the learning rate initialized at 0.01 and decays every 500 steps with a base of 0.5. 

Fig.~\ref{fig:eq-functions-with-fic} plots the solutions to the ORPF problem~\eqref{eq:ORPF} for all data profiles, the equilibrium function $\phi_{32}$ learned with and without pseudo points, the standard droop function $\varrho_{32}$, and the optimized droop function $\rho_{32}$ for the DER at node 32 with $\alpha = \frac{1}{3}$.
In contrast to the case in which no pseudo points are added in the learning process, the learned equilibrium function with pseudo points reaches maximum reactive power compensation capability when voltage exceeds the limits. We further summarize in Table~\ref{tab:average_loss} the average loss for the whole training data set using the learned equilibrium functions and optimized/standard droop control functions, i.e.,
\begin{align}\label{eq:avg_loss}
    \frac{\sum^{K}_{k=1} \|\qbf_{\Cc,k}^\star - \square(\vbf_{\Cc,k}^\star) \|^2}{K \cdot C},
\end{align}
where $\square$ is $\phib$ for the data-based method, $\rhob$ for the optimized droop control, and $\varrhob$ for the standard droop control. The results illustrate the enhanced optimality of the learned equilibrium functions in approximating ORPF solutions compared to benchmarks.

\begin{table}[htb]
\centering
\caption{Average loss values for all data profiles}
\begin{tabular}{| c || c | c | c |}
\hline 
$\alpha$ & \textbf{Learned equil. func.} & Opt. droop func. & Std. droop func.\\
\hline
0 & $\mathbf{0.0090}$ & 0.0305 & 0.0438\\
\hline
1/3 & $\mathbf{0.0060}$ & 0.0171 & 0.0395\\
\hline
1/2 & $\mathbf{0.0071}$ & 0.0429 & 0.0662\\
\hline
2/3 & $\mathbf{0.0175}$ & 0.0724 & 0.0852\\
\hline
1 & $\mathbf{0.0377}$ & 0.0878 & 0.0965\\
\hline
\end{tabular}
\label{tab:average_loss}
\end{table}

Next, we illustrate the advantage of using the incremental algorithm. Recall that to guarantee the convergence of the non-incremental algorithm, i.e., $\epsilon = 1$, one needs to further enforce an additional slope constraint~\eqref{eq:static_global_conv} on the learned equilibrium functions, cf. Remark~\ref{rmk:slope-constraint}. Fig.~\ref{fig:loss} shows that this additional slope constraint leads to larger approximation errors of the learned equilibrium functions in fitting the data set (we do not consider the pseudo points during learning here for fairness), and thus degrades the optimality of system performance. 
\begin{figure}[ht]
    \centering
	\includegraphics[width=.52\textwidth]{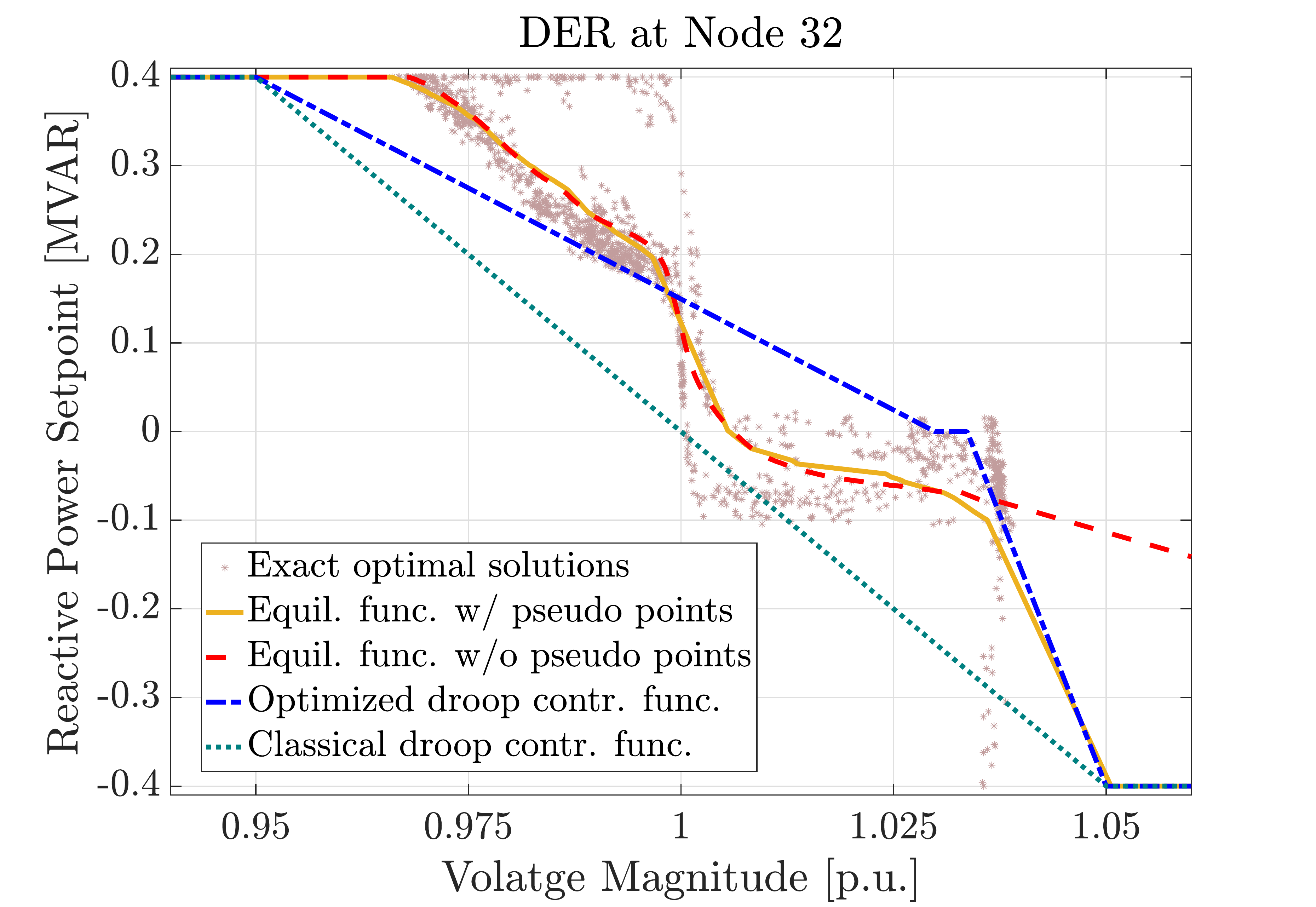}
  \caption{The orange solid and red dashed curves are, respectively, the learned equilibrium functions with and without considering pseudo data points for the DER at node 32. The blue dash-dotted and green dotted curves are, respectively, the optimized linear droop control function~\cite{HJ-CW-PL-JZ-GS-FD-JW:18} and standard linear droop function~\cite{GC-RC:17,KT-PS-SB-MC:11}.
  The comparison between the orange solid and red dashed curves illustrates the role of pseudo data points in learning equilibrium functions. The former reaches the maximum reactive power compensation capability when the voltage exceeds the limits, whereas the latter does not.}
  \label{fig:eq-functions-with-fic}
\end{figure}
\begin{figure}[t]
    \centering
    \includegraphics[width=.52\textwidth]{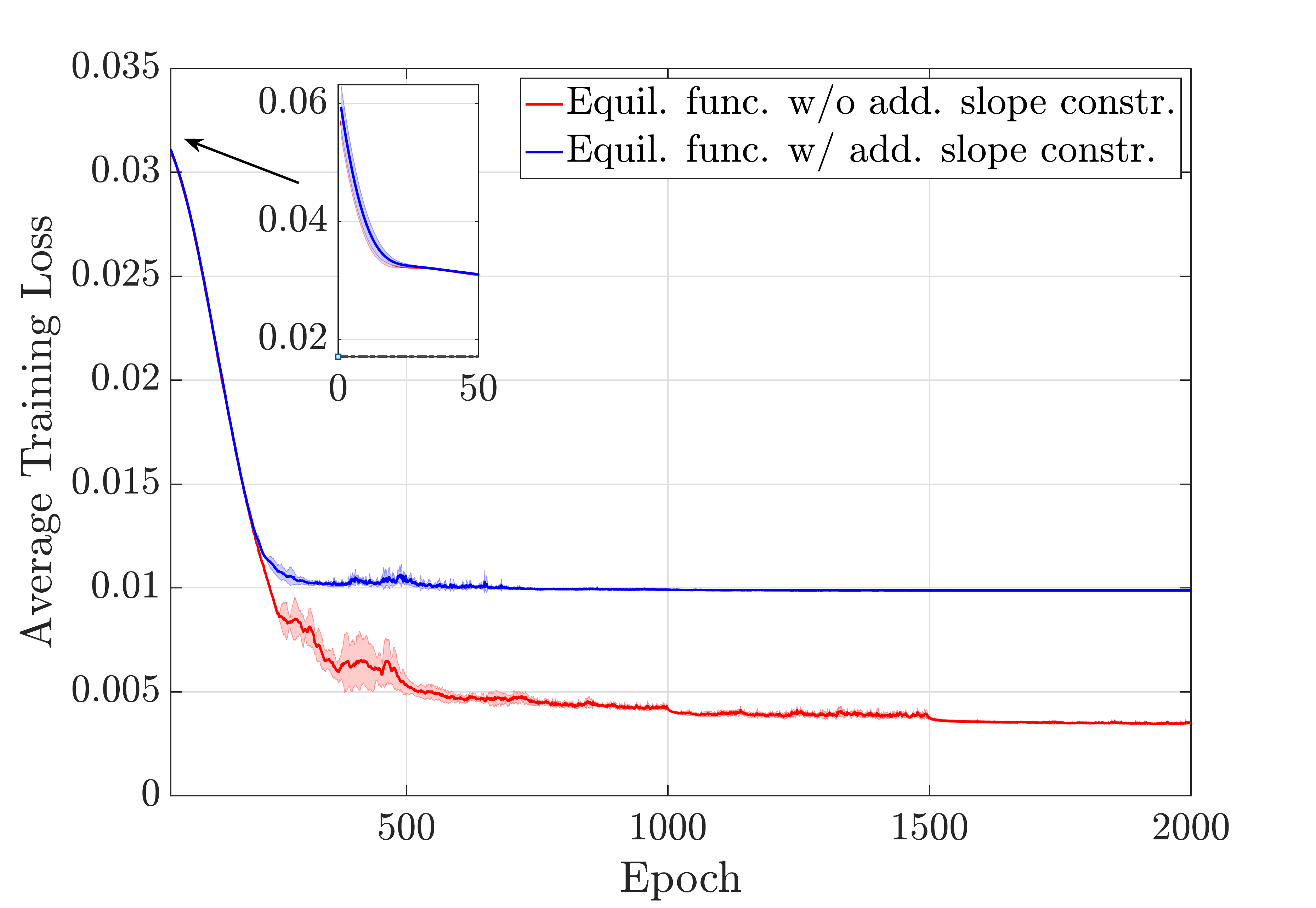}\\
    \caption{
    Comparison of average training losses (MSE) for all DERs with and and without the additional slope constraint~\eqref{eq:static_global_conv} on the equilibrium functions as described in Remark~\ref{rmk:slope-constraint}. The mean and standard deviations are evaluated based on five random seeds. This additional slope constraint leads to  a larger approximation error of the learned equilibrium functions in fitting the data points. 
    }
    \label{fig:loss}
\end{figure}

\subsection{The Control Stage}\label{subsec:simu-control} 
We run the following simulations using the learned equilibrium functions for the case $\alpha = \frac{1}{3}$ with the pseudo data points considered and assume that $\qbf_\Cc(0) = \zeros$. \textsc{Matpower}~\cite{RDQ-CEMS-RJT:11} was used to solve the power flow equation.
First, we verify the convergence properties of the proposed reactive power update rule~\eqref{eq:bus_react_upd} stated in Proposition~\ref{prop:convergence}. Consider the scenario where the load-generation profiles are fixed, Fig.~\ref{fig:stability} reports the evolution of the DERs' reactive power setpoints using load-generation profiles of the 695-th minute and considers 120 iterations of~\eqref{eq:bus_react_upd}. For $\epsilon = 0.369$, the reactive power setpoint trajectories converge to their final values, cf.~Fig.~\ref{fig:stability-0.369}, whereas the case $\epsilon = 1$ fails, cf.~Fig.~\ref{fig:stability-1}.
This is consistent with the sufficient condition 
$$0 < \epsilon < \min\Big\{1, \frac{2}{(\|\Xbf\|L + 1)^2}\Big\} = 0.3691$$ derived in Proposition~\ref{prop:convergence}.

Next, we test the proposed data-based control method in a scenario where the load-generation profiles are time varying. Specifically, we obtain load-generation profiles by randomly perturbing (5\%) the consumption data used to learn the equilibrium functions. This can be interpreted as having the data from the data set prescribing a day-ahead forecast, whereas their random perturbation acts as the true realization of the load-generation scenarios. These loads and generations are minute-based and we consider 120 iterations of~\eqref{eq:bus_react_upd} per minute with $\epsilon = 0.369$.
Fig.~\ref{fig:voltages} compares the evolution of the maximum/minimum voltages under the proposed data-based control method, the optimized droop control method, the ORPF solutions, and the case where no control action is taken.
One can observe that, in contrast to the proposed data-based method, the optimized droop control method induces instability issues within 12:00 and 16:00 causing voltages to oscillate.
We further compare in Fig.~\ref{fig:dist_power} the distances between the actual reactive power setpoints and ORPF solutions,
i.e., $\|\qbf_\Cc - \qbf_\Cc^\star\|$. Compared to the benchmarks, the reactive power setpoints during evolution under the proposed data-based method remains much closer to the ORPF solutions, thus leading to significantly improved optimality.
To further illustrate the effectiveness and advantages of the proposed data-based control method, Table~\ref{tab:dist_power} summarizes the comparison results of the proposed data-based control method against the standard and optimized linear droop control methods, as well as the case where no control action is taken for different values of $\alpha$. We quantify the performance by the average of the distances of the actual and optimal reactive power setpoints, i.e., average of $\|\qbf_\Cc(t) - \qbf_\Cc^\star(t)\|$ for the entire day.
It can be observed that the proposed data-based control method outperforms the benchmark methods in all cases. 

\begin{figure}[t]
    \centering
    \subfigure[$\epsilon = 0.369$]{
    \includegraphics[width=.52\textwidth]{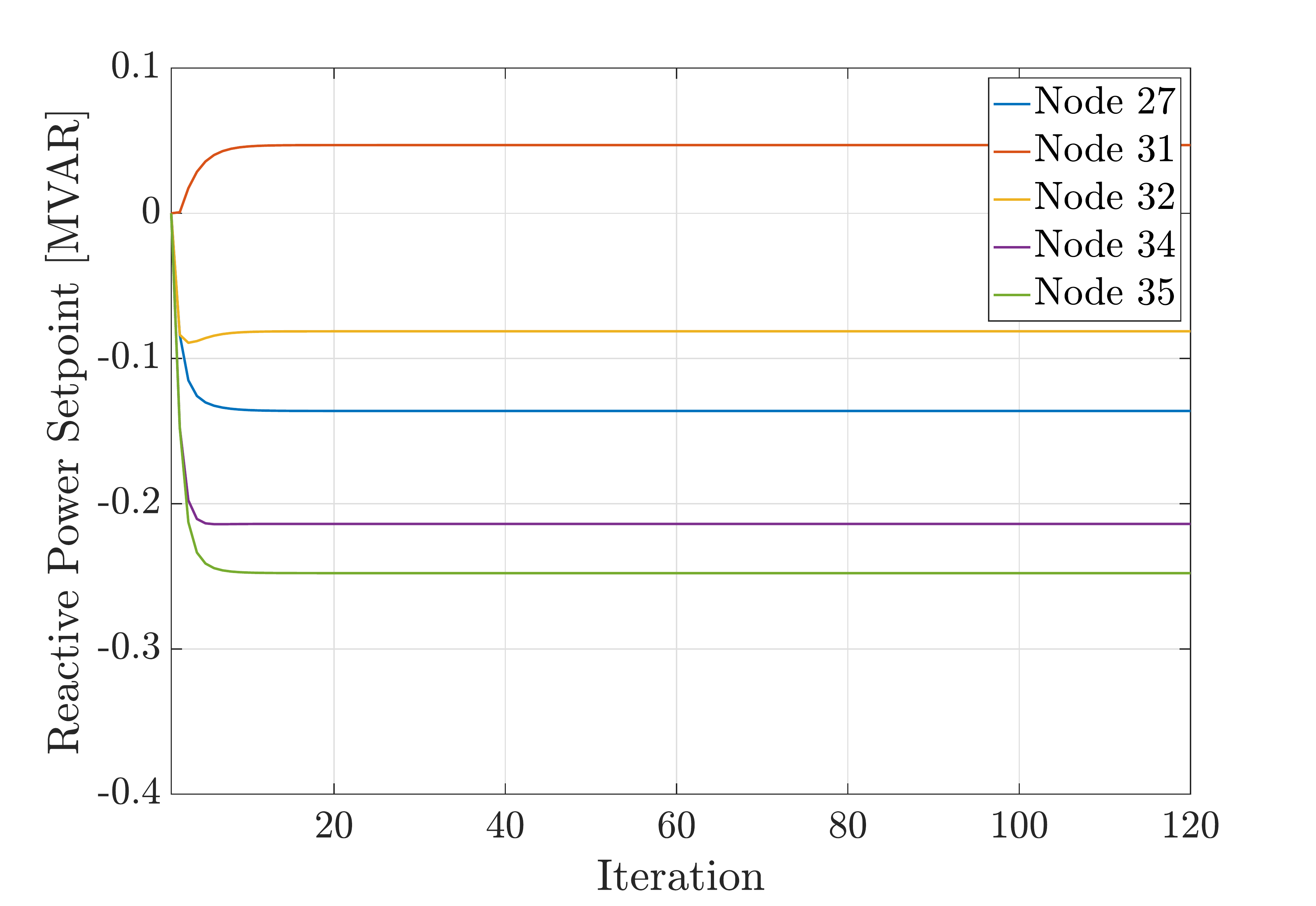}
    \label{fig:stability-0.369}
    }\\
    \vspace{-2ex}
    \subfigure[$\epsilon = 1$]{
    \includegraphics[width=.52\textwidth]{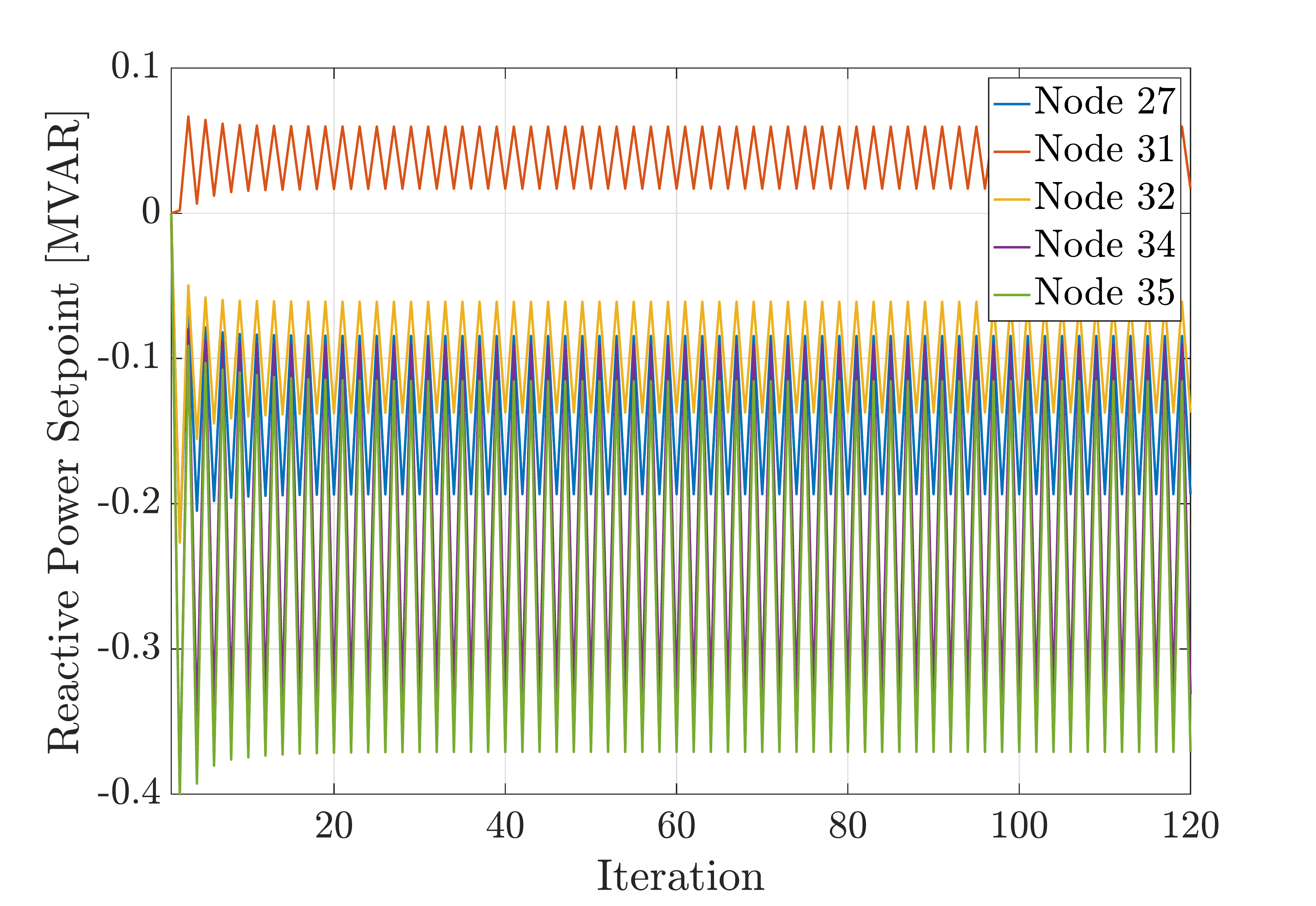}
    \label{fig:stability-1}
    }
  \caption{Evolution of reactive power setpoints under the proposed reactive power update rule~\eqref{eq:bus_react_upd} with (a) $\epsilon = 0.369$ and (b) $\epsilon = 1$, where we use the power data profiles of the 695-th minute and consider 120 iterations. This verifies the sufficient condition $0 < \epsilon < \min\{1,\frac{2}{(\|\Xbf\|L + 1)^2}\} = 0.3691$ in Proposition~\ref{prop:convergence} to ensure global asymptotic stability.
  }
  \label{fig:stability}
\end{figure}

\begin{figure}[htb]
    \centering
    \subfigure[Maximum Voltage]{
    \includegraphics[width=.52\textwidth]{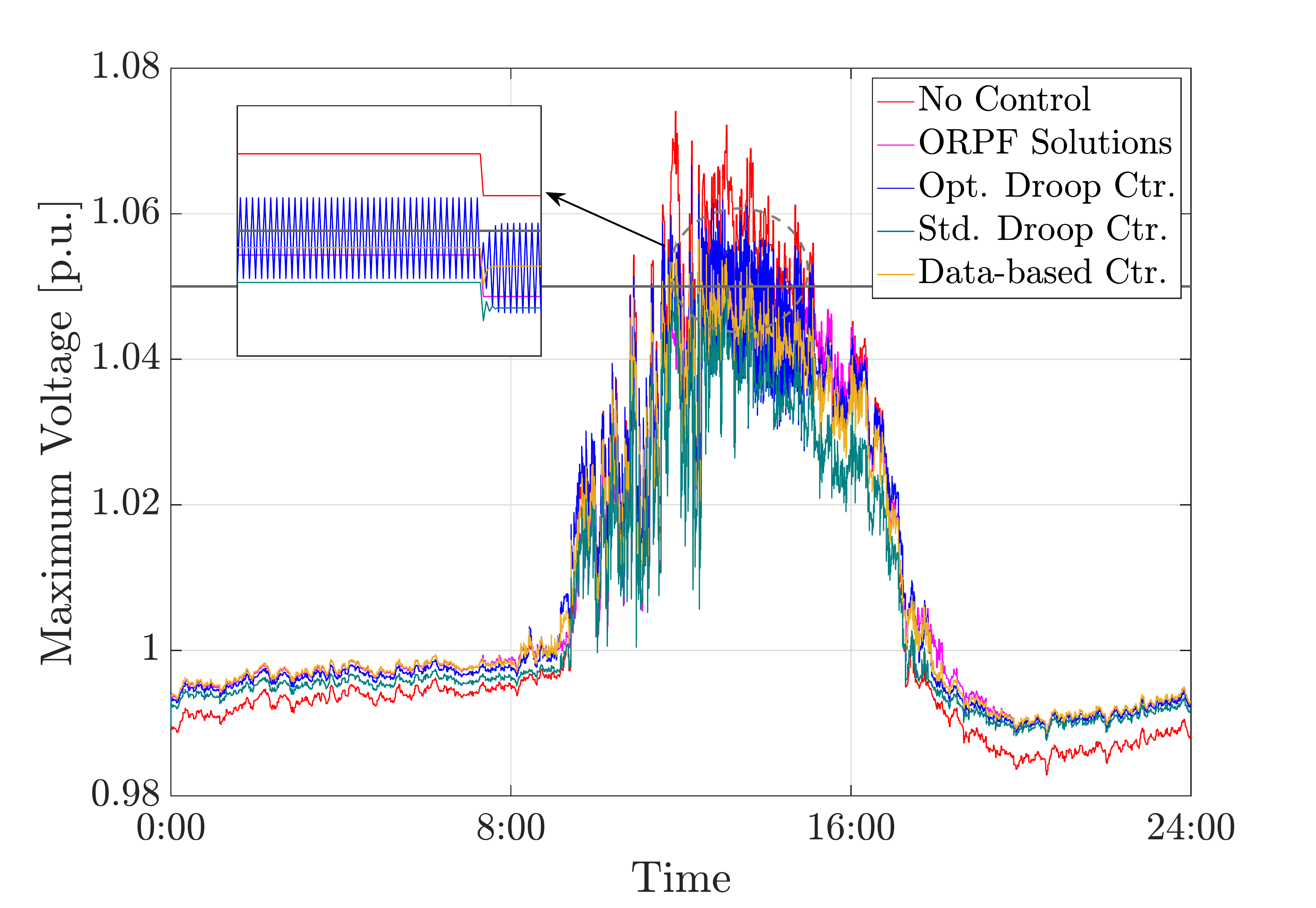}
    }\\
    \vspace{-2ex}
    \subfigure[Minimum Voltage]{
    \includegraphics[width=.52\textwidth]{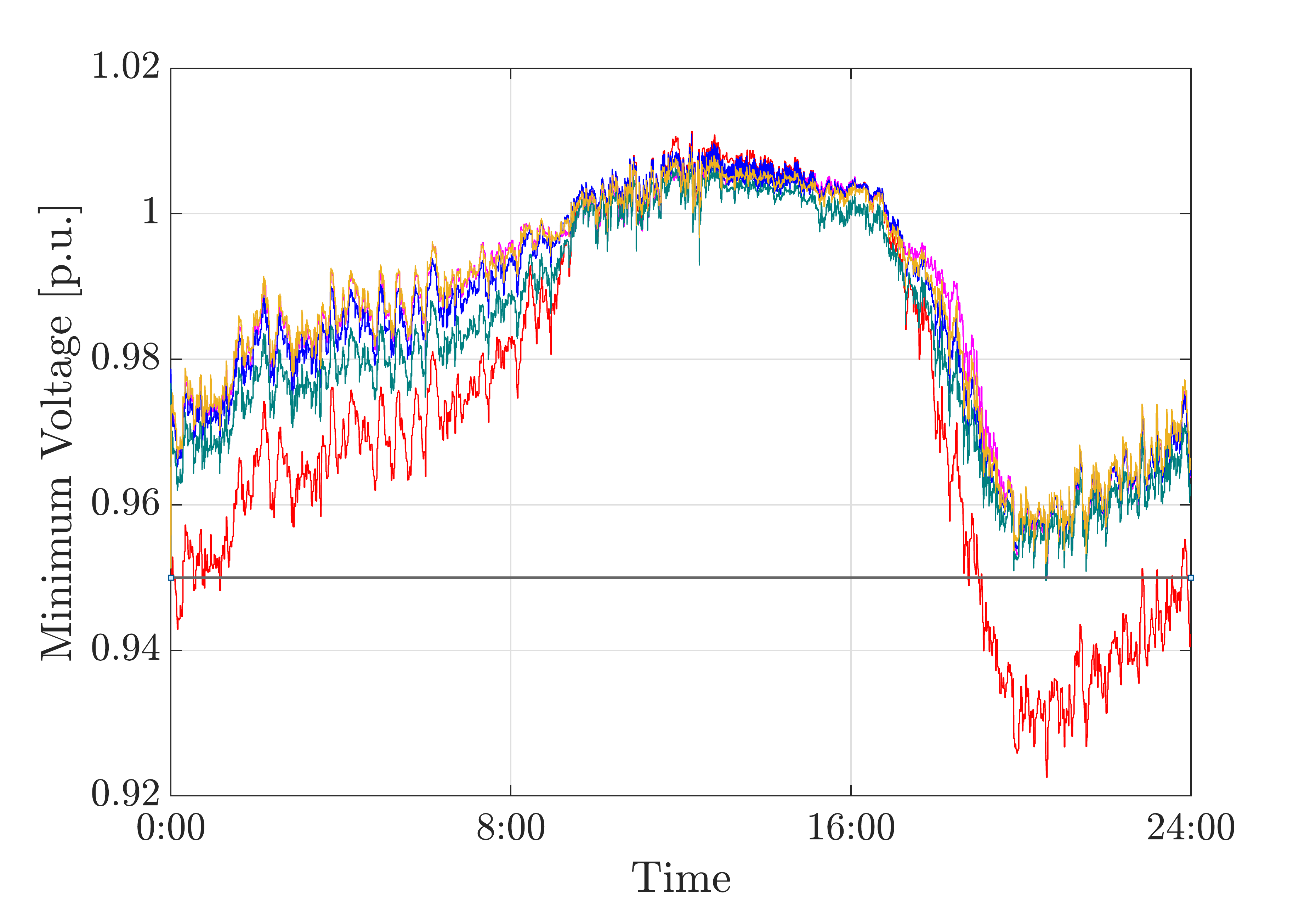}
    }
    \caption{Evolution of the maximum (top) and minimum (bottom) voltages of the IEEE 37-bus network under the proposed data-based, ORPF solutions, optimized linear droop, standard linear droop, and no control methods. For all minute-based data profiles, the ORPF problem is feasible and thus $\qbf_\Cc^\star$ always exists. The optimized droop control induces voltage instability issues, causing voltages oscillations during 12:00 and 16:00, while the proposed data-based method guarantees the convergence of voltages for every minute-based data profile.
    }
    \label{fig:voltages}
\end{figure}

\begin{figure}[htb]
    \centering
    \includegraphics[width=.52\textwidth]{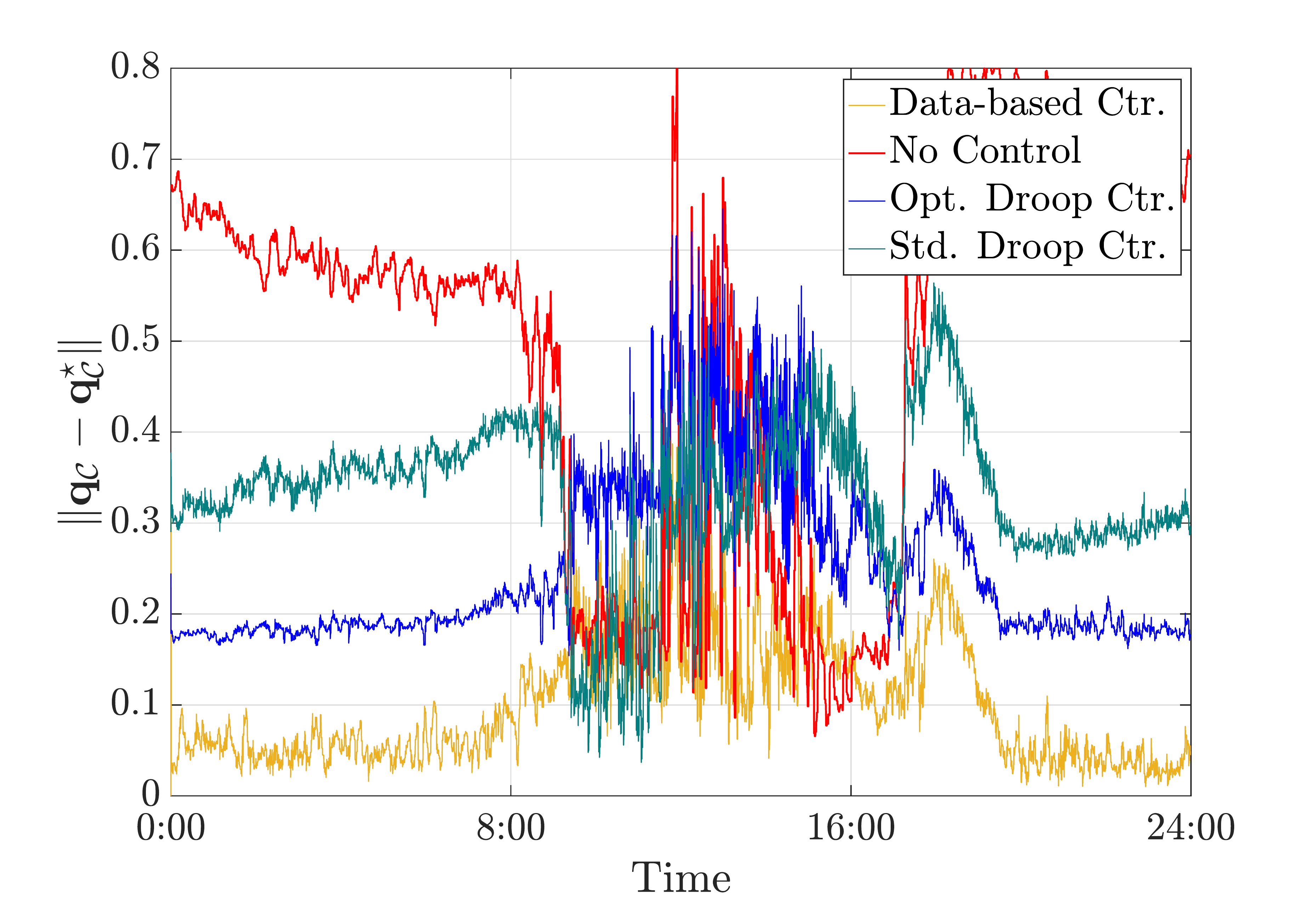}
    \caption{Evolution of the distance between actual reactive power setpoints and ORPF solutions of the IEEE 37-bus network under the proposed data-based, optimized linear droop, standard linear droop, and no control methods.
    }
    \label{fig:dist_power}
\end{figure}

\begin{table}[htb]
\centering
\caption{Average distances between actual reactive power setpoints and ORPF solutions for entire day}
\begin{tabular}{| c | c | c | c | c | c |}
\hline 
$\alpha$ & 0 & 1/3 & 1/2 & 2/3 & 1 \\
\hline
\hline
\textbf{Data-based} & $\mathbf{0.1185}$ & $\mathbf{0.0985}$ & $\mathbf{0.0784}$ & $\mathbf{0.1115}$ & $\mathbf{0.1645}$ \\
\hline
Opt. Droop & 0.2786 & 0.2474 & 0.3728 & 0.4410 & 0.4854 \\
\hline
Std. Droop & 0.2886 & 0.3311 & 0.4076 & 0.4699 & 0.5047 \\
\hline
No Control & 0.3160 & 0.5081 & 0.6842 & 0.8029 & 0.8358 \\
\hline
\end{tabular}
\label{tab:dist_power}
\end{table}

\subsection{Robustness to Voltage Measurement Noise} \label{subsec:robustness}
Here, we test the robustness of the proposed data-based method against the voltage measurement noise. Similar to Section~\ref{subsec:simu-control}, we use the learned equilibrium functions to run the simulations, but add random perturbations to the measurement of the local voltages. Table~\ref{tab:dist_noise} summarizes the distances between actual reactive power setpoints and the ORPF solutions under different voltage measurement noise levels. Specifically, 0.5\% perturbation to the voltage measurement corresponds to a common level of precision among smart meters in the United States~\cite{EEI-AEIC-UTC:11}, whereas the case of 1\% perturbation represents the biggest error allowed in power systems. Table~\ref{tab:dist_noise} indicates that, even in the latter case, the proposed data-based method still significantly outperforms the benchmark methods without measurement noise.

\begin{table}[t]
\centering
\caption{Average distances under measurement noise}
\begin{tabular}{|c | c | c | c | c | c |}
\hline
\diagbox{Noise}{$\alpha$} & 0 & 1/3 & 1/2 & 2/3 & 1 \\
\hline
0.2\%  & 0.1210 & 0.1017 & 0.0936 & 0.1291 & 0.1739 \\
\hline
0.5\% & 0.1317 & 0.1176 & 0.1378 & 0.1862 & 0.2237 \\
\hline
1.0\% & 0.1571 & 0.1553 & 0.2127 & 0.2762 & 0.3129 \\
\hline
\end{tabular}
\label{tab:dist_noise}
\end{table}

\subsection{Discussion}
Our simulation results above validate the improved performance of the proposed data-based method compared to the linear droop control method for different control goals. In fact, apart from considering the minimization of voltage deviations and power losses, our framework allows the users to consider any other type of cost functions, depending on specific control goals, to learn purely local controllers that steer system operating points to approximated ORPF solutions. 
However, as shown in Table~\ref{tab:average_loss}, different cost functions could result in different optimality gaps between the proposed data-based method and the ORPF approach. Since the data set we construct only maps the local voltage to the local optimal reactive power setpoint, it is possible that one fixed voltage corresponds to multiple optimal reactive power setpoints. On the other hand, it could also be that the optimal solution pairs are not as close to the non-increasing shape as we require the equilibrium functions to be.
We refer to these phenomena as \emph{data inconsistency}. We note that different selections of cost function significantly influence the data inconsistency, and thus leads to very different optimality gaps. For example, as Table~\ref{tab:average_loss} suggests, the data becomes significantly more inconsistent when the minimization of the voltage deviations takes a more important role in the cost function.
As part of our follow-up work, we plan to include other available local information to alleviate the data inconsistency challenge, e.g., prevailing (re)active power injections as additional inputs of the equilibrium function. Another important observation is that, although the ORPF approach strictly guarantees that the voltages are within limits, our approach does not. For instance, in Fig.~\ref{fig:voltages}, the voltage nadir during evolution under the proposed data-based method slightly violates the voltage limits. The reason is that when $\alpha$ is relatively small, many of the optimal solutions given by the ORPF problem lie on the boundary of the voltages limits. 
Since the local surrogates only provide approximations of the optimal solutions, the actual converged voltages can easily go out of limits in such situations.
On the other hand, as pointed out in~\cite{SB-RC-GC-SZ:19}, purely local control strategies generally have no guarantee on desired regulation, in the sense that the equilibrium $\qbf_\Cc^\sharp$ of~\eqref{eq:dyn_sys} could result in a $\vbf(\qbf_\Cc^\sharp) \notin [\vbf_{\min},\vbf_{\max}]$, even if there indeed exists $\qbf_\Cc$ such that $\vbf(\qbf_\Cc) \in [\vbf_{\min},\vbf_{\max}]$.

\section{Conclusions}\label{sec:conc}

We have presented a data-driven framework to design local Volt/Var controllers capable of steering a power distribution network towards efficient network configurations.
Building on the idea of learning local surrogates that map local voltages to reactive power setpoints that approximate the ORPF solution, we have proposed a local control update scheme and identified conditions on surrogates and control parameters so that the reactive power point converges in a global asymptotic sense.
By constructing a labeled data set of ORPF solutions with different load and generation profiles, we have trained neural networks whose resulting parameterized functions meet the conditions on surrogates by design to fit the data set. 
We have shown in AC power flow simulation tests that the proposed framework guarantees the voltage stability and significantly reduces the operation cost compared to prevalent local control approaches. Future research directions include considering the regulation of legacy devices and unbalanced DGs, enhancing data consistency by making use of other local information in building the data set, reducing the optimality gap during the learning process, and extending the proposed framework to a more general scenario where we take advantage of communication among neighboring agents. 

\appendices

\section{Technical Lemmas}

\begin{lemma}\longthmtitle{Bauer and Fike Theorem~\cite[Corollary 6.3.4]{RAH-CRJ:12}}\label{lem:bauer_and_fike}
Let $\Abf,\Ebf \in \real^{n \times n}$ with $\Abf$ normal. If $\hat{\lambda}$ is an eigenvalue
of $\Abf + \Ebf$, then there exists an eigenvalue $\lambda$ of $\Abf$ such that $|\hat{\lambda} - \lambda| \leq \|\Ebf\|$.
\end{lemma}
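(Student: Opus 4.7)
The plan is to reduce the claim to the invertibility of a perturbed identity operator, exploiting the fact that a normal matrix is unitarily diagonalizable so that the resolvent $(\Abf - \hat\lambda \Ibf)^{-1}$ has an easily computable spectral norm. First, I would dispense with the trivial case in which $\hat\lambda$ is itself an eigenvalue of $\Abf$; then the conclusion holds with equality $0 \leq \|\Ebf\|$.

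Assuming henceforth that $\hat\lambda$ is not an eigenvalue of $\Abf$, I would note that $(\Abf - \hat\lambda \Ibf)$ is invertible and perform the factorization
\begin{equation*}
\Abf + \Ebf - \hat\lambda \Ibf = (\Abf - \hat\lambda \Ibf)\bigl(\Ibf + (\Abf - \hat\lambda \Ibf)^{-1}\Ebf\bigr).
\end{equation*}
Since $\hat\lambda$ is an eigenvalue of $\Abf+\Ebf$, the left-hand side is singular; because $(\Abf-\hat\lambda\Ibf)$ is invertible, the factor $\Ibf + (\Abf - \hat\lambda \Ibf)^{-1}\Ebf$ must be singular. A standard consequence of the Neumann series (or a direct argument: a singular $\Ibf+\Mbf$ forces $\|\Mbf\|\geq 1$) then yields $\|(\Abf - \hat\lambda \Ibf)^{-1}\Ebf\| \geq 1$, and submultiplicativity gives $\|(\Abf - \hat\lambda \Ibf)^{-1}\|\,\|\Ebf\| \geq 1$.

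Next, I would use normality of $\Abf$ to compute the resolvent norm explicitly. Writing $\Abf = \Ubf \Dbf \Ubf^{*}$ with $\Ubf$ unitary and $\Dbf = \diag(\lambda_1,\dots,\lambda_n)$, unitary invariance of the spectral norm gives
\begin{equation*}
\|(\Abf - \hat\lambda \Ibf)^{-1}\| = \|(\Dbf - \hat\lambda \Ibf)^{-1}\| = \max_{i} \frac{1}{|\hat\lambda - \lambda_i|} = \frac{1}{\min_i |\hat\lambda - \lambda_i|}.
\end{equation*}
Combining this with the previous inequality yields $\min_i |\hat\lambda - \lambda_i| \leq \|\Ebf\|$, so choosing $\lambda$ to be a minimizing eigenvalue establishes the claim.

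The only slightly delicate point is justifying that a singular $\Ibf + \Mbf$ implies $\|\Mbf\| \geq 1$; this follows because if $(\Ibf+\Mbf)\xbf = 0$ for some unit vector $\xbf$, then $\Mbf\xbf = -\xbf$ so $\|\Mbf\|\geq 1$. Everything else is bookkeeping. Since the statement is cited as a corollary from Horn and Johnson, an alternative is to simply invoke the reference; I include the sketch above to keep the appendix self-contained.
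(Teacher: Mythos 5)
Your argument is correct. Note, however, that the paper does not prove this lemma at all: it is invoked directly as Corollary 6.3.4 of Horn and Johnson, so there is no in-paper proof to compare against. What you give is the classical resolvent-based proof of Bauer--Fike --- factor $\Abf+\Ebf-\hat\lambda\Ibf=(\Abf-\hat\lambda\Ibf)\bigl(\Ibf+(\Abf-\hat\lambda\Ibf)^{-1}\Ebf\bigr)$, deduce $\|(\Abf-\hat\lambda\Ibf)^{-1}\Ebf\|\geq 1$ from singularity, and evaluate the resolvent norm via unitary diagonalization of the normal matrix --- which is essentially the same proof found in the cited reference, so your sketch simply makes the appendix self-contained. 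One small point worth making explicit if you include it: although $\Abf,\Ebf$ are real, $\hat\lambda$, the eigenvalues $\lambda_i$, and the null vector $\xbf$ may be complex, so the whole argument should be read over $\complex^{n}$; this is harmless because the spectral norm of the real matrix $\Ebf$ is unchanged when it is viewed as a complex operator, and a real normal matrix is still unitarily diagonalizable over $\complex$.
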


\begin{lemma}\longthmtitle{Positive semidefiniteness of $\Xbf \Mbf$ and upper bound of $\|\Xbf \Mbf\|$}\label{lem:eigvls}
The matrix $\Xbf \Mbf$ is positive semidefinite. Moreover, it holds
\begin{align*}
    \|\Xbf \Mbf\| \leq \|\Xbf\| L.
\end{align*}
\end{lemma}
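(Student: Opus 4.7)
\emph{Plan.} The key is to exploit the structure of $\Mbf$ inherited from conditions C1)--C2). Although $\Mbf$ is not explicitly defined in this excerpt, in the convergence analysis it arises as a diagonal ``secant slope'' matrix associated with the nonincreasing Lipschitz map $\phib$, whose $n$-th diagonal entry equals $-(\phi_n(v_n)-\phi_n(v_n'))/(v_n-v_n')$ when $v_n\neq v_n'$ and $0$ otherwise, so that $\phib(\vbf)-\phib(\vbf')=-\Mbf(\vbf-\vbf')$. By C2) each entry is nonnegative, and by C1) each is bounded by $L_n\le L$. Hence $\Mbf$ is diagonal and symmetric positive semidefinite, with $\lambda_{\max}(\Mbf)\le L$, so in particular $\|\Mbf\|\le L$.

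For the first claim, I would note that $\Xbf\Mbf$ is in general nonsymmetric and interpret ``positive semidefinite'' in the spectral sense (all eigenvalues real and nonnegative). Since $\Xbf\succ 0$, its unique symmetric positive definite square root $\Xbf^{1/2}$ exists and is invertible, so the factorization $\Xbf\Mbf=\Xbf^{1/2}(\Xbf^{1/2}\Mbf\Xbf^{1/2})\Xbf^{-1/2}$ exhibits $\Xbf\Mbf$ as similar to the congruence $\Xbf^{1/2}\Mbf\Xbf^{1/2}$. That congruence is symmetric and positive semidefinite because $\Mbf\succeq 0$, so it has nonnegative real eigenvalues; similar matrices share spectra, and the claim follows.

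For the norm bound I would simply apply submultiplicativity of the spectral norm, $\|\Xbf\Mbf\|\le\|\Xbf\|\,\|\Mbf\|$, and combine with $\|\Mbf\|=\lambda_{\max}(\Mbf)\le L$ established above.

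The main obstacle is largely interpretive: making precise what ``positive semidefinite'' means for the nonsymmetric product $\Xbf\Mbf$ (the similarity argument above sidesteps the issue by reducing to a symmetric matrix) and pinning down the secant-slope definition of $\Mbf$ from C1)--C2). Once these are in hand, the algebraic steps are routine and do not require Lemma~\ref{lem:bauer_and_fike}, which presumably plays its role in the subsequent convergence argument rather than in this supporting lemma.
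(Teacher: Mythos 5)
Your proof is correct and follows essentially the same route as the paper: both reduce the nonsymmetric product $\Xbf\Mbf$ to the symmetric positive semidefinite matrix $\Xbf^{1/2}\Mbf\Xbf^{1/2}$ (the paper via a left-eigenpair computation, you via the explicit similarity $\Xbf\Mbf=\Xbf^{1/2}(\Xbf^{1/2}\Mbf\Xbf^{1/2})\Xbf^{-1/2}$), and both obtain the norm bound from submultiplicativity together with $\|\Mbf\|\leq L$, which follows from C1)--C2) exactly as you describe.
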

\begin{proof}
Let $(\lambda_i, \xib_i)$ be a left eigenpair for $\Xbf\Mbf$. Then, $(\lambda_i, \xib_i\Xbf^{\frac 1 2})$ is a left eigenpair for the symmetric matrix $\Xbf^{\frac 1 2} \Mbf \Xbf^{\frac 1 2} \succeq 0$. Indeed,
\begin{align*}
   \xib_i \Xbf^{\frac 1 2} \Xbf^{\frac 1 2} \Mbf  \Xbf^{\frac 1 2} &= \xib_i \Xbf \Mbf \Xbf^{\frac 1 2}  
   = \lambda_i \xib_i  \Xbf^{\frac 1 2}.
\end{align*}
Therefore, $\Xbf\Mbf$ is positive semidefinite as well. 
Also,
\begin{align*}
    \|\Xbf \Mbf\| \leq \|\Xbf\| \|\Mbf\| \leq \|\Xbf\|L,
\end{align*}
which completes the proof.
\end{proof}

\section{Proofs of Propositions}

\begin{proof}[Proof of Proposition~\ref{prop:eq-point}]
First, we show by induction the feasibility of the reactive power update~\eqref{eq:bus_react_upd}. The initial power injection $q_n(0)$ belongs to $\Qc_n$ by hypothesis. Assume now that $q_n(t) \in \Qc_n$ and C3) holds. Then $q_n(t+1)$ is the convex combination of two elements of $\Qc_n$. 
Next, we show that~\eqref{eq:dyn_sys} has an unique equilibrium point. From~\eqref{eq:fixed-point}, the equilibrium exists if $\qbf_\Cc = \hbf(\qbf_\Cc)$ has a solution, where $\hbf : \Qc \rightarrow \Qc$ is a continuous vector function with $ \hbf(\qbf_\Cc) = \phib(\Xbf \qbf_\Cc + \hat \vbf_\Cc)$. Since $\Qc$ is convex and compact, according to Brouwer’s Fixed Point Theorem~\cite[Corollary 6.6]{KCB:85}, such a solution exists. Finally, to show uniqueness, we reason by contradiction. Assume both $(\qbf_\Cc^\sharp,\vbf_\Cc^\sharp)$ and $(\qbf_\Cc^\natural,\vbf_\Cc^\natural)$ are equilibrium points for~\eqref{eq:dyn_sys} with $\qbf_\Cc^\sharp \neq \qbf_\Cc^\natural$.
From~\eqref{eq:fixed-point-q},
\begin{align}
    \qbf_\Cc^\natural - \qbf_\Cc^\sharp &= \phib(\vbf_\Cc^\natural) - \phib(\vbf_\Cc^\sharp) = \Dbf(\vbf_\Cc^\natural - \vbf_\Cc^\sharp), \label{eq:proof_Prop}
\end{align}
where $\Dbf \in \real^{C \times C}$ is a diagonal matrix with
\begin{align*}
D_n = 
\begin{cases}
\frac{\phi_n(v_n^\natural) - \phi_n(v_n^\sharp)}{v_n^\natural - v_n^\sharp} & v_n^\natural \neq v_n^\sharp, \\
\hfil 0 & v_n^\natural = v_n^\sharp.
\end{cases}
\end{align*}
From C2), $\phi_n$ is nonincreasing in $v_n$ for all $n \in \Cc$. Hence, $D_n \leq 0, \forall n \in \Cc$, and $\Dbf \preceq 0$.
On the other hand,~\eqref{eq:fixed-point-v} yields 
$$\qbf_\Cc^\natural - \qbf_\Cc^\sharp = \Xbf^{-1} (\vbf_\Cc^\natural - \vbf_\Cc^\sharp).$$
Then, it follows that
$$(\Xbf^{-1} - \Dbf)(\vbf_\Cc^\natural - \vbf_\Cc^\sharp) = \zeros.$$
Since $\Xbf \succ 0$, it holds that $\Xbf^{-1} \succ 0$, $\Xbf^{-1} - \Dbf \succ 0$. As a consequence, $\vbf_\Cc^\natural - \vbf_\Cc^\sharp = \zeros$ and $\qbf_\Cc^\natural - \qbf_\Cc^\sharp = \zeros$, cf.~\eqref{eq:fixed-point-q}, which is a contradiction. This completes the proof.
\end{proof}

\begin{proof}[Proof of Proposition~\ref{prop:convergence}] 
Consider the voltage evolution under~\eqref{eq:dyn_sys},
\begin{align*}
&\vbf_\Cc(t+1) = \Xbf \qbf_\Cc(t+1) + \hat{\vbf}_\Cc 
\\
&= (1-\epsilon) \Xbf \qbf_\Cc(t) + \epsilon \Xbf \phib(\vbf_\Cc(t)) + (1-\epsilon)\hat{\vbf}_\Cc + \epsilon \hat{\vbf}_\Cc  
\\
&=(1-\epsilon) \vbf_\Cc(t) + \epsilon(\Xbf \phib( \vbf_\Cc(t) ) + \hat{\vbf}_\Cc) := \gbf(\vbf_\Cc(t)).
\end{align*}
We show that, for small enough values of $\epsilon$, the operator $\gbf: \real^C \to \real^C$ is a contraction,~i.e.,
\begin{align}\label{eq:contraction}
    \frac{\|\gbf(\vbf_\Cc) - \gbf(\vbf_\Cc^\prime)\|}{\|\vbf_\Cc - \vbf_\Cc^\prime\|} < 1 ,
\end{align}
for any $\vbf_\Cc,\vbf_\Cc^\prime \in \real^C$. Indeed, define a diagonal matrix $\Mbf \in \real^{C \times C}$ with the $n$-th diagonal entry being
\begin{align*}
    M_n = 
\begin{cases}    
\frac{ |\phi_n(v_n) - \phi(v_n^\prime)| }{ |v_n - v_n^\prime| } & v_n \neq v_n^\prime, \\
\hfil 0 & v_n = v_n^\prime.
\end{cases}
\end{align*}
Then, it follows that
\begin{align*}
    &\|\gbf(\vbf_\Cc) - \gbf(\vbf_\Cc^\prime)\| \\
    &= \big\|(1-\epsilon)(\vbf_\Cc-\vbf_\Cc^\prime) + \epsilon \Xbf \big(\phib(\vbf_\Cc) - \phib(\vbf_\Cc^\prime) \big)\big\| \notag \\
    &= \big\|(1-\epsilon) \sign(\vbf_\Cc-\vbf_\Cc^\prime) |\vbf_\Cc-\vbf_\Cc^\prime| \\
    & \qquad \qquad \qquad - \epsilon \Xbf \sign(\vbf_\Cc-\vbf_\Cc^\prime) |\phib(\vbf_\Cc) - \phib(\vbf_\Cc^\prime)| \big\| \notag \\
    &= \big\|(1-\epsilon) |\vbf_\Cc-\vbf_\Cc^\prime|  - \epsilon \Xbf |\phib(\vbf_\Cc) - \phib(\vbf_\Cc^\prime)| \big\| \notag \\
    &\leq \|(1-\epsilon) \Ibf - \epsilon \Xbf \Mbf \| \|\vbf_\Cc-\vbf_\Cc^\prime\|,
\end{align*}
where we have used in the second equality the fact that $\phi_n$ is nonincreasing in $v_n$ for each $n \in \Cc$, and thus $\sign(\phib(\vbf_\Cc) - \phib(\vbf_\Cc^\prime)) = - \sign(\vbf_\Cc-\vbf_\Cc^\prime)$.

To prove \eqref{eq:contraction}, it is sufficient to show that there always exists $\epsilon$ such that $\|(1-\epsilon) \Ibf - \epsilon \Xbf \Mbf \| < 1$, which is equivalent to proving that $\lambda_{\max}(\Gammab) < 1$, where $\Gammab \succeq 0$ is
\begin{align*}
  \Gammab \triangleq& \left[ (1-\epsilon) \Ibf - \epsilon \Xbf \Mbf \right]^\top \left[ (1-\epsilon) \Ibf - \epsilon \Xbf \Mbf \right] \\
=&  (1-\epsilon)^2 \Ibf - \epsilon(1-\epsilon) (\Mbf \Xbf^\top + \Xbf \Mbf) + \epsilon^2 \Mbf \Xbf^\top \Xbf \Mbf.
\end{align*}
Rewrite $\Gammab$ as
\begin{align*}
    \Gammab = &\underbrace{(1 - 2\epsilon) \Ibf - \epsilon(\Mbf \Xbf^\top + \Xbf \Mbf )}_{:=\Abf} \\
    &\qquad  + \epsilon^2 \underbrace{(\Ibf + \Xbf \Mbf + \Mbf \Xbf^\top + \Mbf \Xbf^\top \Xbf \Mbf)}_{:=\Ebf}.
\end{align*}
Note that $\Abf$ is symmetric. According to Lemma~\ref{lem:bauer_and_fike}, provided in the Appendix, it holds that $0 \leq \lambda_{\max}(\Gammab) \leq \lambda_{\max}(\Abf) + \epsilon^2\|\Ebf\|$. Now, it remains to show that there always exists $\epsilon$ such that
\begin{align*}
    \lambda_{\max}(\Abf) + \epsilon^2\|\Ebf\| < 1.
\end{align*}
Let $\underline{\lambda} = \lambda_{\min}(\Mbf \Xbf^\top + \Xbf \Mbf)$. According to Lemma~\ref{lem:eigvls}, $\Mbf\Xbf^\top, \Xbf\Mbf \succeq 0$, and therefore $\underline\lambda \geq 0$. This condition is then equivalent to
\begin{equation*}
 1 - \epsilon(2 + \underline{\lambda}) + \epsilon^2 \|\Ebf\| < 1,
\end{equation*}
which yields 
\begin{equation*}
0 < \epsilon < \frac{(2 + \underline\lambda)}{\|\Ebf\|}.
\label{eq:eps_first_bound}
\end{equation*}
Finally, according to Lemma~\ref{lem:eigvls}, it holds
\begin{align*}
    \|\Ebf\| &\leq 1 + 2 \|\Xbf\Mbf\| + \|\Xbf\Mbf\|^2 \\
    &\leq 1 + 2 \|\Xbf\|L + \|\Xbf\|^2L^2 = (\|\Xbf\|L + 1)^2,
\end{align*}
and hence
\begin{align*}
\frac{2 + \underline\lambda}{\|\Ebf\|} \geq \frac{2}{\|\Ebf\|} \geq \frac{2}{(\|\Xbf\|L + 1)^2}.
\end{align*}
Considering that we require $\epsilon \in [0,1]$,~\eqref{eq:conv_cond} follows, concluding  the proof.
\end{proof}
{\color{navy}
\noindent The proof of the Proposition~\ref{prop:convergence} here contains some imprecisions and we refer the readers to the Proposition 7.1 of the following book chapter: 
Z. Yuan, G. Cavraro, and J. Cort\'es, ``\href{https://www.sciencedirect.com/science/article/pii/B978044321524700013X}{Learning stable local Volt/Var controllers in distribution grids},'' \emph{Big Data Application in Power Systems (Second Edition)}, pp. 135-159, 2024, where we provide a corrected version that leads to the same conclusions but poses less restrictive condition on the stepsize $\epsilon$. 
}

\begin{proof}[Proof of Proposition~\ref{prop:universaL-approx}]
To show the sufficiency of~\eqref{eq:enc_c2}, notice that, for any $x \in \real$, if $x < b_1$, then $\mathsf{N}(x) = \beta$; and for $x \geq b_1$, $\mathsf{N}(x)$ is divided into $H$ segments with the slope of the $J$-th segment being $\sum_{j=1}^{J} w_j$ for $J \in \{1,2,\dots,H\}$. Given~\eqref{eq:enc_c2}, it follows that $\mathsf{N}$ is nonincreasing. To show the necessity of~\eqref{eq:enc_c2}, suppose $\mathsf{N}$ is nonincreasing. Then, there exists a segment of the equilibrium function, e.g., the $J$-th, $J \in \{1,2,\dots,H\}$ which is increasing. Hence, $\sum_{j=1}^{J} w_j > 0$ which is a  contradiction. 

We next show the universal approximation property.
Given a compact domain $\Xc$ of the form $\underline x \leq x \leq \overline x$, consider an equispaced partition of $\Xc$ into $H$ intervals, with the length of each interval being $s = \frac{\overline x - \underline x}{H}$. 
Consider the function
\begin{align*}
\mathsf{N}(x) = \sum_{h=1}^{H} w_h {\rm ReLU}(x - b_h) + \beta,
\end{align*}
whose parameters are defined as: $\beta = g(\underline x)$, $b_h = \underline x + (h-1)s$, $w_1 = \frac{g(\underline x + s)-g(\underline x)}{s}$, $w_2 = \frac{g(\underline x+2s)-g(\underline x+s)}{s} - w_1$,..., $w_h = \frac{g(\underline x+hs)-g(\underline x+(h-1)s)}{s} - \sum_{k=1}^{h-1} w_{k}$. Note that it holds that 
$$\sum_{k=1}^h w_k = \frac{g(\underline x+hs)-g(\underline x+(h-1)s)}{s} \leq 0$$ 
because $g(x)$ is nonincreasing, for $h \in \{1,2,\dots,H\}$. Hence, inequality~\eqref{eq:enc_c2} is satisfied. It can be observed that
$$ |\mathsf{N}(x) - g(x)| \leq L_g s $$
where $L_g$ is the Lipschitz constant of $g$. Hence, by setting $H \geq \frac{L_g(\overline x - \underline x)}{\eta} = O(1/\eta)$, it holds that $|\mathsf{N}(x) - g(x)| \leq L_g s \leq \eta$ for all $x \in \Xc$. This completes the proof.
\end{proof}

\bibliographystyle{ieeetr}

\begin{thebibliography}{10}

\bibitem{SHL:14}
S.~H. Low, ``Convex relaxation of optimal power flow - {P}art {I}: Formulations
  and equivalence,'' {\em IEEE Transactions on Control of Network Systems},
  vol.~1, no.~1, pp.~15--27, 2014.

\bibitem{BC-XAS:18}
B.~Cui and X.~A. Sun, ``A new voltage stability-constrained optimal power-flow
  model: {S}ufficient condition, {SOCP} representation, and relaxation,'' {\em
  IEEE Transactions on Power Systems}, vol.~33, no.~5, pp.~5092--5102, 2018.

\bibitem{LG-NL-UT-SL:13}
L.~Gan, N.~Li, U.~Topcu, and S.~H. Low, ``Optimal power flow in tree
  networks,'' in {\em {IEEE} Conf.\ on Decision and Control}, (Firenze, Italy),
  pp.~2313--2318, Dec. 2013.

\bibitem{XP-TZ-MC-SZ:21}
X.~Pan, T.~Zhao, M.~Chen, and S.~Zhang, ``Deep{OPF}: A deep neural network
  approach for security-constrained {DC} optimal power flow,'' {\em IEEE
  Transactions on Power Systems}, vol.~36, no.~3, pp.~1725--1735, 2021.

\bibitem{DO-FG-AR:20}
D.~Owerko, F.~Gama, and A.~Ribeiro, ``Optimal power flow using graph neural
  networks,'' in {\em IEEE Int.\ Conf.\ on Acoustics, Speech and Signal
  Processing}, (Barcelona, Spain), pp.~5930--5934, 2020.

\bibitem{MKS-SG-VK-GC-AB:20}
M.~K. Singh, S.~Gupta, V.~Kekatos, G.~Cavraro, and A.~Bernstein, ``Learning to
  optimize power distribution grids using sensitivity-informed deep neural
  networks,'' in {\em IEEE Int.\ Conf.\ on Communications, Control, and
  Computing Technologies for Smart Grids}, (Tempe, AZ, USA), Nov. 2020.

\bibitem{MKS-VK-GBG:21}
M.~K. Singh, V.~Kekatos, and G.~B. Giannakis, ``Learning to solve the
  {A}{C}-{O}{P}{F} using sensitivity-informed deep neural networks,'' {\em IEEE
  Transactions on Power Systems}, vol.~37, no.~4, pp.~2833--2846, 2022.

\bibitem{FF-TWKM-PVH:20}
F.~Fioretto, T.~W.~K. Mak, and P.~Van~Hentenryck, ``Predicting {AC} optimal
  power flows: Combining deep learning and lagrangian dual methods,'' in {\em
  AAAI Conference on Artificial Intelligence}, (New York, USA), pp.~630--637,
  Feb. 2020.

\bibitem{ED-HZ-GBG:13}
E.~Dall'Anese, H.~Zhu, and G.~B. Giannakis, ``Distributed optimal power flow
  for smart microgrids,'' {\em IEEE Transactions on Smart Grid}, vol.~4, no.~3,
  pp.~1464--1475, 2013.

\bibitem{GC-RC:17}
G.~Cavraro and R.~Carli, ``Local and distributed voltage control algorithms in
  distribution networks,'' {\em IEEE Transactions on Power Systems}, vol.~33,
  no.~2, pp.~1420--1430, 2017.

\bibitem{ED-AS:18}
E.~Dall'Anese and A.~Simonetto, ``Optimal power flow pursuit,'' {\em IEEE
  Transactions on Smart Grid}, vol.~9, no.~2, pp.~942--952, 2018.

\bibitem{GQ-NL:20}
G.~Qu and N.~Li, ``Optimal distributed feedback voltage control under limited
  reactive power,'' {\em IEEE Transactions on Power Systems}, vol.~35, no.~1,
  pp.~315--331, 2020.

\bibitem{IEEE1547}
``{IEEE} standard for interconnection and interoperability of distributed
  energy resources with associated electric power systems interfaces,'' {\em
  IEEE Std 1547-2018 (Revision of IEEE Std 1547-2003)}, pp.~1--138, 2018.

\bibitem{KT-PS-SB-MC:11}
K.~Turitsyn, P.~Sulc, S.~Backhaus, and M.~Chertkov, ``Options for control of
  reactive power by distributed photovoltaic generators,'' {\em Proceedings of
  the IEEE}, vol.~99, no.~6, pp.~1063--1073, 2011.

\bibitem{HZ-HJL:15}
H.~Zhu and H.~J. Liu, ``Fast local voltage control under limited reactive
  power: {O}ptimality and stability analysis,'' {\em IEEE Transactions on Power
  Systems}, vol.~31, no.~5, pp.~3794--3803, 2015.

\bibitem{XZ-MF-ZL-LC-SHL:21}
X.~Zhou, M.~Farivar, Z.~Liu, L.~Chen, and S.~H. Low, ``Reverse and forward
  engineering of local voltage control in distribution networks,'' {\em IEEE
  Transactions on Automatic Control}, vol.~66, no.~3, pp.~1116--1128, 2021.

\bibitem{SB-RC-GC-SZ:19}
S.~Bolognani, R.~Carli, G.~Cavraro, and S.~Zampieri, ``On the need for
  communication for voltage regulation of power distribution grids,'' {\em IEEE
  Transactions on Control of Network Systems}, vol.~6, no.~3, pp.~1111--1123,
  2019.

\bibitem{CD-SG-CF:22}
C.~Dawson, S.~Gao, and C.~Fan, ``Safe control with learned certificates: A
  survey of neural {L}yapunov, barrier, and contraction methods,'' {\em arXiv
  preprint arXiv:2202.11762}, 2022.

\bibitem{WC-YJ-BZ:22}
W.~Cui, Y.~Jiang, and B.~Zhang, ``Reinforcement learning for optimal primary
  frequency control: {A} {L}yapunov approach,'' {\em IEEE Transactions on Power
  Systems}, 2022.
\newblock To appear.

\bibitem{ZY-CZ-JC:22-scl}
Z.~Yuan, C.~Zhao, and J.~Cort\'es, ``Reinforcement learning for distributed
  transient frequency control with stability and safety guarantees,'' {\em
  Systems \& Control Letters}, 2022.
\newblock Submitted.

\bibitem{WC-JL-BZ:22}
W.~Cui, J.~Li, and B.~Zhang, ``Decentralized safe reinforcement learning for
  inverter-based voltage control,'' {\em Electric Power Systems Research},
  vol.~211, p.~108609, 2022.

\bibitem{YS-GQ-SL-AA-AW:22}
Y.~Shi, G.~Qu, S.~H. Low, A.~Anandkumar, and A.~Wierman, ``Stability
  constrained reinforcement learning for real-time voltage control,'' in {\em
  {A}merican {C}ontrol {C}onference}, (Atlanta, GA), pp.~2715--2721, June 2022.

\bibitem{CZ-YX-YW-ZYD-RZ:21}
C.~Zhang, Y.~Xu, Y.~Wang, Z.~Y. Dong, and R.~Zhang, ``Three-stage
  hierarchically-coordinated {V}oltage/{V}ar control based on {PV} inverters
  considering distribution network voltage stability,'' {\em IEEE Transactions
  on Sustainable Energy}, vol.~13, no.~2, pp.~868--881, 2021.

\bibitem{SG-SC-VK:22}
S.~Gupta, S.~Chatzivasileiadis, and V.~Kekatos, ``Deep learning for optimal
  {V}olt/{VAR} control using distributed energy resources,'' {\em arXiv
  preprint arXiv:2211.09557}, 2022.

\bibitem{SK-PA-GH:19}
S.~Karagiannopoulos, P.~Aristidou, and G.~Hug, ``Data-driven local control
  design for active distribution grids using off-line optimal power flow and
  machine learning techniques,'' {\em IEEE Transactions on Smart Grid},
  vol.~10, no.~6, pp.~6461--6471, 2019.

\bibitem{PST-PHG:22}
P.~S. Torre and P.~Hidalgo-Gonzalez, ``Decentralized optimal power flow for
  time-varying network topologies using machine learning,'' {\em Electric Power
  Systems Research}, vol.~212, p.~108575, 2022.

\bibitem{XS-JQ-JZ:21}
X.~Sun, J.~Qiu, and J.~Zhao, ``Optimal local {V}olt/{V}ar control for
  photovoltaic inverters in active distribution networks,'' {\em IEEE
  Transactions on Power Systems}, vol.~36, no.~6, pp.~5756--5766, 2021.

\bibitem{HJ-CW-PL-JZ-GS-FD-JW:18}
H.~Ji, C.~Wang, P.~Li, J.~Zhao, G.~Song, F.~Ding, and J.~Wu, ``A
  centralized-based method to determine the local voltage control strategies of
  distributed generator operation in active distribution networks,'' {\em
  Applied Energy}, vol.~228, pp.~2024--2036, 2018.

\bibitem{GC-ZY-MKS-JC:22-cdc}
G.~Cavraro, Z.~Yuan, M.~K. Singh, and J.~Cort\'es, ``Learning local
  {V}olt/{V}ar controllers towards efficient network operation with stability
  guarantees,'' in {\em {IEEE} Conf.\ on Decision and Control}, (Cancun,
  Mexico), pp.~5056--5061, Dec. 2022.

\bibitem{JF-YS-GQ-SHL-AA-AW:22}
J.~Feng, Y.~Shi, G.~Qu, S.~H. Low, A.~Anandkumar, and A.~Wierman, ``Stability
  constrained reinforcement learning for real-time voltage control in
  distribution systems,'' {\em arXiv preprint arXiv:2209.07669}, 2022.

\bibitem{AMK-MP:17}
A.~M. Kettner and M.~Paolone, ``On the properties of the compound nodal
  admittance matrix of polyphase power systems,'' {\em IEEE Transactions on
  Power Systems}, vol.~34, no.~1, pp.~444--453, 2019.

\bibitem{MF-XZ-LC:15}
M.~Farivar, X.~Zhou, and L.~Chen, ``Local voltage control in distribution
  systems: {A}n incremental control algorithm,'' in {\em IEEE Int.\ Conf.\ on
  Communications, Control, and Computing Technologies for Smart Grids}, (Miami,
  FL, USA), pp.~732--737, Nov. 2015.

\bibitem{HD-MV:10}
H.~Daniels and M.~Velikova, ``Monotone and partially monotone neural
  networks,'' {\em IEEE Transactions on Neural Networks}, vol.~21, no.~6,
  pp.~906--917, 2010.

\bibitem{AW-GL:19}
A.~Wehenkel and G.~Louppe, ``Unconstrained monotonic neural networks,'' in {\em
  {C}onference on {N}eural {I}nformation {P}rocessing {S}ystems}, vol.~32,
  (Vancouver, Canada), pp.~1543--1553, Dec. 2019.

\bibitem{XL-XH-NZ-QL:20}
X.~Liu, X.~Han, N.~Zhang, and Q.~Liu, ``Certified monotonic neural networks,''
  in {\em {C}onference on {N}eural {I}nformation {P}rocessing {S}ystems},
  vol.~33, (Vancouver, Canada), pp.~15427--15438, Dec. 2020.

\bibitem{IS-OS:17}
I.~Safran and O.~Shamir, ``Depth-width tradeoffs in approximating natural
  functions with neural networks,'' in {\em International Conference on Machine
  Learning}, (Sydney, Australia), pp.~2979--2987, Aug. 2017.

\bibitem{PecanData}
``Dataport: The world’s largest energy data resource,'' {\em Pecan Street
  Inc}, 2015.
\newblock Available at \url{https://dataport.pecanstreet.org/}.

\bibitem{MG-SB:14-cvx}
M.~Grant and S.~Boyd, ``{CVX}: Matlab software for disciplined convex
  programming, version 2.1,'' Mar. 2014.
\newblock Available at \url{http://cvxr.com/cvx}.

\bibitem{DPK-JB:15}
D.~P. Kingma and J.~Ba, ``Adam: {A} method for stochastic optimization,'' in
  {\em International Conference on Learning Representations}, (San Diego, CA,
  USA), May 2015.

\bibitem{RDQ-CEMS-RJT:11}
R.~D. Zimmerman, C.~E. Murillo-S\'{a}nchez, and R.~J. Thomas, ``Matpower:
  {S}teady-state operations, planning and analysis tools for power systems
  research education,'' {\em IEEE Transactions on Power Systems}, vol.~26,
  no.~1, pp.~12--19, 2011.

\bibitem{EEI-AEIC-UTC:11}
EEI-AEIC-UTC, ``Smart meters and smart meter systems: A metering industry
  perspective,'' {\em Washington, DC, USA, Edison Elect. Inst., White Paper},
  2011.

\bibitem{RAH-CRJ:12}
R.~A. Horn and C.~R. Johnson, {\em Matrix Analysis}.
\newblock Cambridge University Press, 2012.

\bibitem{KCB:85}
K.~C. Border, {\em Fixed Point Theorems with Applications to Economics and Game
  Theory}.
\newblock Cambridge, UK: Cambridge University Press, 1985.

\end{thebibliography}

\begin{IEEEbiography}[{\includegraphics[width=1in,height=1.25in,clip,keepaspectratio]{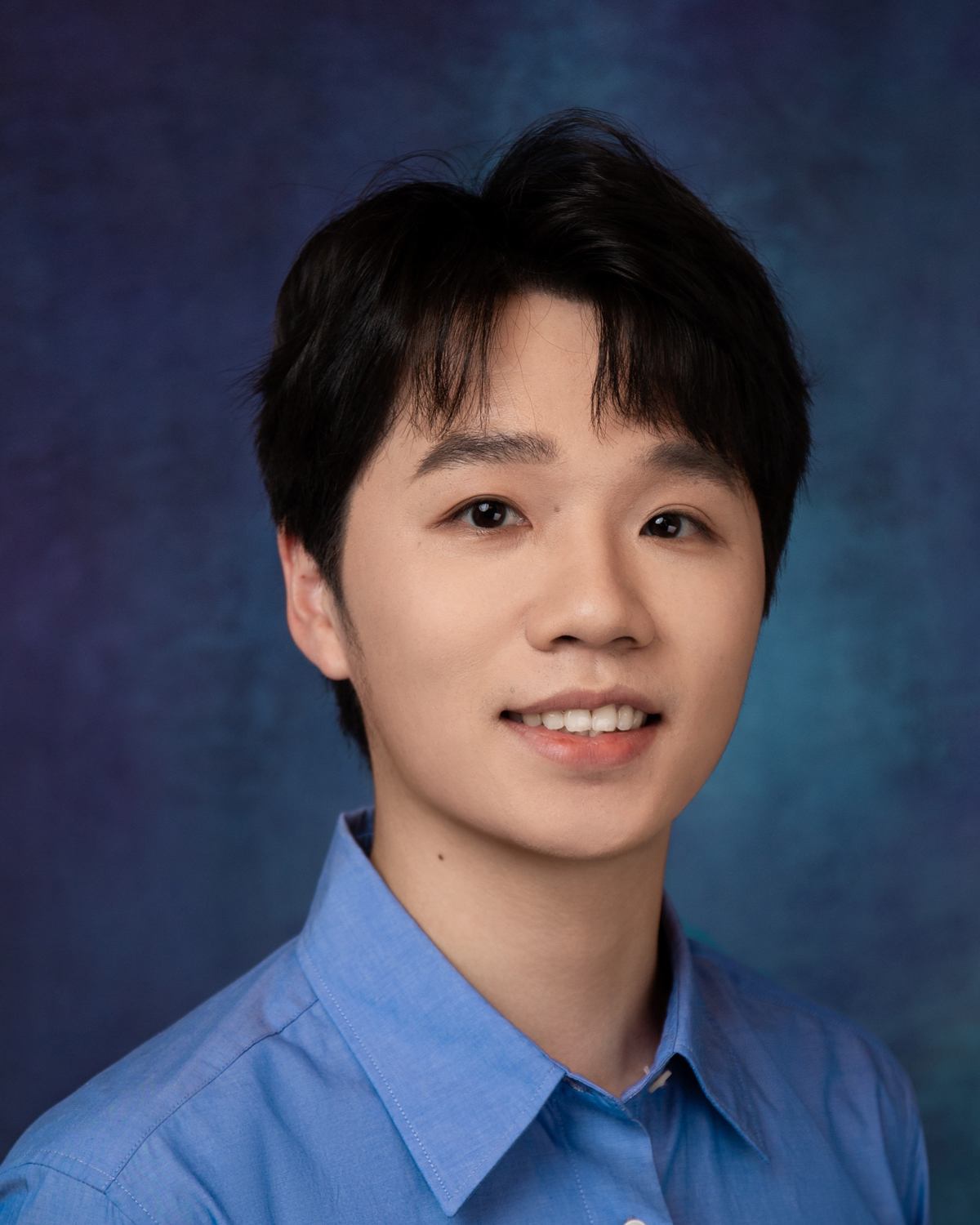}}]{Zhenyi Yuan} (Student Member, IEEE) was born in Jiangxi, China, in 1998. He is currently working toward the Ph.D. degree in mechanical engineering at the University of California, San Diego, CA, USA. Prior to that, he was enrolled in successive undergraduate and postgraduate program at the Honors School of Harbin Institute of Technology, Harbin, China, where he received the B.S. and M.S. degrees in control science and engineering in 2018 and 2020, respectively. He was also a visiting research assistant with the Department of Information Engineering of the Chinese University of Hong Kong, Hong Kong, in 2021. His research interests lie at the intersection of control, optimization and learning, with applications to smart grids and robotic systems.
\end{IEEEbiography}

\begin{IEEEbiography}%
[{\includegraphics[width=1in,height=1.25in,clip,keepaspectratio]{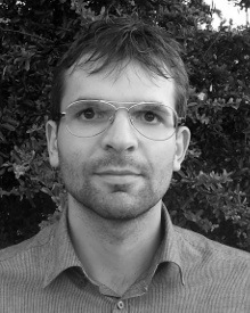}}]
{Guido Cavraro} (Member, IEEE) received the Ph.D. degree in Information Engineering from the University of Padova, Italy, in 2015. He was a visiting scholar at the California Institute for Energy and Environment (CIEE) at U.C. Berkeley in 2014. In 2015 and 2016, he was a postdoctoral associate at the Department of Information Engineering of the University of Padova. From 2016 to 2018, he was a postdoctoral associate with the Bradley Department of Electrical and Computer Engineering of Virginia Tech, USA. Currently, he is a Senior Researcher with the Power Systems Engineering Center at National Renewable Energy Laboratory, USA. His research interests include control, optimization, and estimation with applications to power systems.
\end{IEEEbiography}

\begin{IEEEbiography}[{\includegraphics[width=1in,height=1.25in,clip,keepaspectratio]{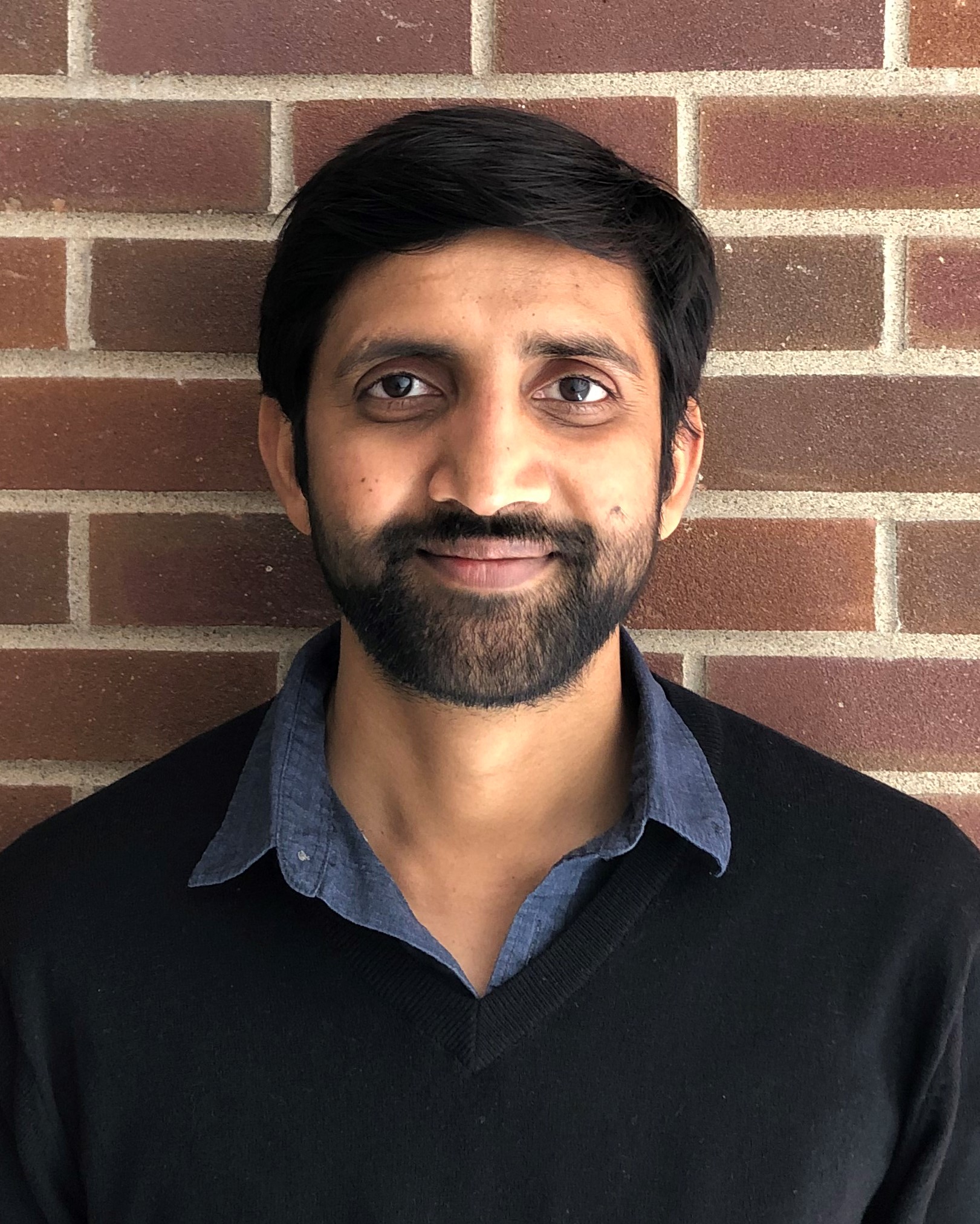}}] {Manish K. Singh} (Member, IEEE) received the B.Tech. degree from the Indian Institute of Technology (BHU), Varanasi, India, in 2013; and the M.S. and Ph.D. degrees in electrical engineering from Virginia Tech, Blacksburg, VA, USA, in 2018 and 2021, respectively. He is currently a postdoctoral researcher with the University of Minnesota, Minneapolis, MN. During 2013-2016, he worked as an Engineer in the Smart Grid Dept. of POWERGRID, the central transmission utility of India. His research interests are focused on optimization, control, and learning techniques to develop algorithmic solutions for the operation and analysis of electric power systems as well as water and natural gas networks.
\end{IEEEbiography}

\begin{IEEEbiography}[{\includegraphics[width=1in,height=1.25in,clip,keepaspectratio]{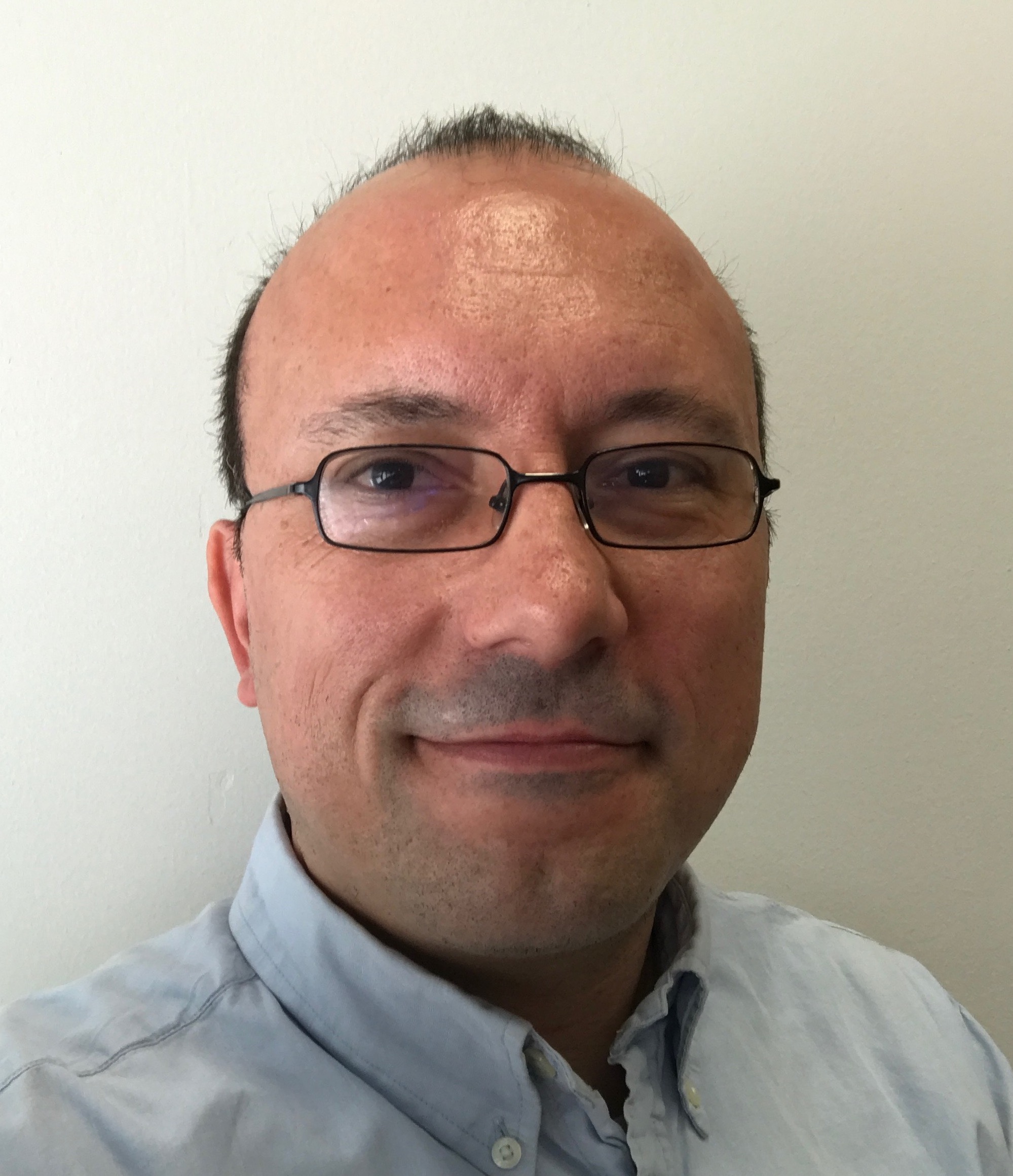}}]{Jorge Cort\'es} (Fellow, IEEE) received the Licenciatura degree in mathematics from
Universidad de Zaragoza, Zaragoza, Spain, in 1997, and the
Ph.D. degree in engineering mathematics from Universidad Carlos III de
Madrid, Madrid, Spain, in 2001. He held postdoctoral positions with
the University of Twente, Twente, The Netherlands, and the University
of Illinois at Urbana-Champaign, Urbana, IL, USA. He was an Assistant
Professor with the Department of Applied Mathematics and Statistics,
University of California, Santa Cruz, CA, USA, from 2004 to 2007. He
is a Professor in the Department of Mechanical and Aerospace
Engineering, University of California, San Diego, CA, USA.  He is a
Fellow of IEEE, SIAM, and IFAC.  His research interests include
distributed control and optimization, network science, nonsmooth
analysis, reasoning and decision making under uncertainty, network
neuroscience, and multi-agent coordination in robotic, power, and
transportation networks.
\end{IEEEbiography}

\end{document}